\def\ps@headings{%
\def\@oddhead{\mbox{}\scriptsize\rightmark \hfil \thepage}%
\def\@evenhead{\scriptsize\thepage \hfil \leftmark\mbox{}}%
\def\@oddfoot{}%
\def\@evenfoot{}}
\newtheorem{theorem}{Theorem}
\newtheorem{lemma}[theorem]{Lemma}
\def\ie{\textit{i.e.}\xspace}
\def\whp{{\emph{w.h.p}}\xspace}
\def\clockshift{t_{\delta}}
\def\colorc{c}
\def\colorset{\mathscr{C}}
\def\colorindex{\theta}
\def\communicateset{\vartheta}
\def\conone{c_1}
\def\coninterf{c_3}
\def\confour{c_5}
\def\datarate{\varrho}
\def\demand{D}
\def\degree{d}
\def\degreemax{\Delta}
\def\graphr{R}
\def\interfhop{\eta}
\def\interfpara{\varphi}
\def\interferenceM{\textbf{I}}
\def\load{\mathcal {L}}
\def\levelmax{\mathscr{L}}
\def\level{l}
\def\period{T}
\def\periodsize{m}
\def\protocol{\textbf{ADC}\xspace}
\def\quorum{Q}
\def\QS{\Omega}
\def\QSsizemin{c_4}
\def\realtime{t}
\def\region{\sigma}
\def\rotate{\mathscr{S}}
\def\selecttime{K}
\def\sink{s}
\def\slot{\tau}
\def\slotset{\mathfrak{L}}
\def\slotsetsize{\kappa}
\def\strategy{\mathcal {S}}
\def\subperiod{\varsigma}
\def\tree{\mathscr{T}}
\begin{document}
\title{Quorum-based Localized Scheme for Duty Cycling in Asynchronous Sensor Networks
}
\author{
\IEEEcompsocitemizethanks{This paper was published on IEEE 8th International Conference on Mobile Ad hoc and Sensor Systems (MASS, 2011)}\authorblockN{Jianhui Zhang\authorrefmark{1}\authorrefmark{3}, Shao-jie Tang\authorrefmark{2}, Xingfa Shen\authorrefmark{1}, Guojun Dai\authorrefmark{1}, Amiya Nayak\authorrefmark{3}}
\\\authorblockA{\authorrefmark{1}\small College of Computer Science and Technology, Hangzhou Dianzi
University, Hangzhou 310018 China.}
\\\authorblockA{\authorrefmark{2} Department of Computer Science, Illinois Institute of Technology, Chicago, IL60616, USA}
\\\authorblockA{\authorrefmark{3} SITE, University of Ottawa, Ottawa, Ontario K1N 6N5, Canada}\vspace{-3.5em}
}
\maketitle
\vspace{-4em}
\begin{abstract}
\label{section:abstract}
Many TDMA- and CSMA-based protocols try to obtain fair channel access and to increase channel utilization.
It is still challenging and crucial in Wireless Sensor Networks (WSNs), especially when the time synchronization cannot be well guaranteed and consumes much extra energy.
This paper presents a localized and on-demand scheme \protocol to adaptively adjust duty cycle based on \emph{quorum systems}.
\protocol takes advantages of TDMA and CSMA and guarantees that (1) each node can fairly access channel based on its demand; (2) channel utilization can be increased by reducing competition for channel access among neighboring nodes;  (3) every node has at least one rendezvous active time slot with each of its neighboring nodes even under asynchronization.
The latency bound of data aggregation is analyzed under \protocol to show that \protocol can bound the latency under both synchronization and asynchronization.
We conduct extensive experiments in TinyOS on a real test-bed with TelosB nodes to evaluate the performance of \protocol.
Comparing with B-MAC, \protocol substantially reduces the contention for channel access and energy consumption, and improves network throughput.
\end{abstract}
\begin{IEEEkeywords}
Duty Cycle; Quorum Systems; Media Access Control; Wireless Sensor Networks
\end{IEEEkeywords}
\IEEEpeerreviewmaketitle

\section{Introduction}
\label{section:introduction}
WSNs have been applied in various environments such as ecological surveillance~\cite{mo2009canopy}.
Because of hardware limitation, sensor nodes have limited energy and unprecise clocks.
Various approaches have been designed to save energy and improve some network performances on throughput, delay and per-node fairness.
In order to achieve good cooperation among nodes, synchronization protocols, e.g FTSP~\cite{maróti2004flooding},
    were proposed but considerable energy and time were consumed especially in large scale networks as well.
How to design protocols to guarantee the communication among nodes under the asynchronous network becomes a very
    critical and challenging problem.
Media Access Control (MAC) protocols let nodes to know when and how to access common channels~\cite{ye2004medium}.
Some popular MAC protocols, such as TDMA- and CSMA-based, were designed to share communication
   medium among nodes by assigning each node some fixed active time slots in TDMA or by letting nodes locally contest their channel access in CSMA.
Both of two types of protocols try to build a physically connected network while controlling nodes' active time period in
   order to reduce energy consumption and improve network throughput.

TDMA has the advantage that time slots are previously scheduled to each node.
Therefore, a network can achieve high channel utilization under high media access contention and reduce collision among
   neighbors with a low cost when their clocks are well synchronized.
But TDMA also has some disadvantages~\cite{ye2004medium}, some of which are caused by clock asynchronization.
B-MAC~\cite{polastre2004versatile} adopts Low Power Listening (LPL) to solve communication failure caused by clock asynchronization.
Although CSMA doesn't austerely require the clock synchronization, it cannot achieve channel utilization as high as TDMA
  and costs additional time and energy on channel access contention.
Thus some hybrid MAC protocols, such as B-MAC~\cite{polastre2004versatile}, S-MAC~\cite{ye2004medium} and T-MAC~\cite{van2003adaptive}, combining the strengths of both TDMA and CSMA, were proposed.
These MAC protocols essentially adopt the LPL technique or improved LPL to allievate localized asynchronization problem.
However, they cannot avoid channel contention and obtain channel allocation fairness in many scenarios~\cite{jian2008can}.
Thus another challenging problem is to decrease the unfair contention for medium access without synchronization while increasing the channel allocation fairness.

This paper designs a localized scheme, named Adjustment of Duty Cycle (\protocol), based on \emph{Quorum Systems} (QS)~\cite{malkhi1998byzantine}, to adaptively adjust the duty cycle of each node.
A QS is a set of subsets of a universe set such that every pair of subsets intersect with no empty.
In recent years, QS is applied to establish control channels in dynamic spectrum access networks~\cite{bian2009quorum}, to save power~\cite{wu2008fully}, and to maximize throughput in
limited information multiparty MAC~\cite{chaporkar2006dynamic}.
By \protocol, each node can select sufficient amount of active time slots, composing a set $\subperiod\subset\period$
  (called a quorum) according to the amount of its demand while it can sleep to save energy at its rest time in a period.
Therefore, its duty cycle $|\subperiod|/|\period|$ is adaptively adjusted when the amount of active time slots
   $|\subperiod|$ is changed.
Each node will inform its neighbors of a quorum it selected thus the channel contention among them is decreased.
The contributions of this paper are as follows:
\\$\bullet$ \protocol can adaptively adjust duty cycle by demand, and
  increase the channel allocation fairness comparing with existing contention-based MAC protocols.
\\$\bullet$ \protocol guarantees each pair of nodes having sufficient rendezvous active time to implement demand, and
      the worst case of channel utilization is lower bounded.
\\$\bullet$ By \emph{the rotation closure property} and \emph{intersection property} of QS, the successful connectivity of a whole network is guaranteed even under asynchronization so no extra energy is consumed on synchronization.
\\$\bullet$ This paper analyzes the performance of \protocol under data aggregation, and derives the impact of QS load
  on network delay, which is defined as a duration from one moment some data is sampled to another that all data is received by the sink.

The organization of this paper is as follows.
We first give the network model, formulate our problem and introduce the QS technology in Section~\ref{sec:system model and preliminary}.
In Section~\ref{sec:quorum system based time slot assignment}, we design our protocol \protocol and address its preliminary properties.
Under clock synchronization and asynchronization, the performances of \protocol are presented when certain demand is implemented in Section~\ref{sec:synchronous demand implementation} and~\ref{sec:asynchronous demand implementation} respectively.
In Section~\ref{sec:experiment}, we implement our scheme \protocol in a real test-bed consisting of TelosB nodes and
   evaluate its performance on real systems.
Section~\ref{sec:related work} tells the relative work in recent years.
The work of paper is concluded in Section~\ref{sec:conclusion}.
\section{System Model and Preliminary}
\label{sec:system model and preliminary}
\subsection{Network Model and Quorum System}\label{subsection:network model}
We formulate a network by a graph $G(V,E)$, where $V$ (or $E$) is a set of all nodes (or edges).
Let $n$ denote the number of total nodes and $\sink$ denote an only sink in the network.
Each node is assigned a unique ID.
A radius of the network $G$ with respect to $\sink$, denote by $\graphr$, is defined as the maximum distance (hops) between $\sink$ and those nodes in $G$.
This paper studies the duty cycle adjustment under several popular interference models (denoted by $\interferenceM$):
RTS/CTS, protocol model and physical model~\cite{li2008capacity}.

QS, denoted by $\QS$, was used and introduced in precious papers~\cite{tseng2003power}\cite{jiang2005quorum}\cite{bian2009quorum}.
A QS $\QS\subset2^{\period}$, containing quorums, denoted by $\quorum$, is a set of subsets $\subperiod$ of $\period$, where $\period=\{\slot_1,\cdots,\slot_{\periodsize}\}$ is a period and composes of $\periodsize$ time slots.
A \emph{rotation} of a quorum $\quorum$ is defined as $\rotate$($\quorum$, $i$)=\{($\slot_j+i$) mod $\periodsize|\slot_j\in \quorum$\}, where $i$ is a non-negative integer.
Some QSs satisfy the \emph{rotation closure property}, \ie $\forall i\in\{0,\cdots, \periodsize-1\}: \quorum_1\cap \rotate(\quorum_2,i)\neq \varnothing$, where $\quorum_1,\quorum_2\in \QS$.
\begin{lemma}\label{lemma:rotation satisfication}
Grid, torus and cyclic QS all satisfy the \emph{rotation closure property}~\cite{jiang2005quorum}.
\end{lemma}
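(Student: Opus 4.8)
The plan is to settle the three constructions separately, since their combinatorics are unrelated; in each case I fix the standard definitions used in~\cite{jiang2005quorum} and then, for arbitrary quorums $\quorum_1,\quorum_2\in\QS$ and an arbitrary shift $i\in\{0,\dots,\periodsize-1\}$, point to a specific slot that is forced into $\quorum_1\cap\rotate(\quorum_2,i)$. Producing that forced slot is the whole content of the rotation closure property, so once it is exhibited in all three cases we are done.

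I would start with the \emph{cyclic} quorum system, which is the easiest. There $\QS$ is the orbit under rotation of a single base quorum $\quorum^{\ast}=\{d_1,\dots,d_k\}\subseteq\mathbb{Z}_{\periodsize}$ chosen so that every residue $\delta\in\mathbb{Z}_{\periodsize}$ equals some difference $d_x-d_y\pmod{\periodsize}$ (a relaxed difference set), and every quorum has the form $\rotate(\quorum^{\ast},j)$. Because $\rotate(\rotate(\quorum^{\ast},j),i)=\rotate(\quorum^{\ast},i+j)$ and because a rotation is a bijection of $\period$ that sends intersections to intersections, the claim collapses to showing $\quorum^{\ast}\cap\rotate(\quorum^{\ast},\delta)\neq\varnothing$ for all $\delta$, i.e. there exist $d_x,d_y\in\quorum^{\ast}$ with $d_x\equiv d_y+\delta\pmod{\periodsize}$. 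That is exactly the difference-set property, so the cyclic case is immediate.

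Next, the \emph{grid} quorum system: identify the $\periodsize=k^2$ slots of $\period$ with the cells of a $k\times k$ array by $\slot\mapsto(\lfloor\slot/k\rfloor,\ \slot\bmod k)$, and let a quorum be one full row together with one full column. Take the column index $c$ of $\quorum_1$; its column is $\{\slot\in\period:\slot\equiv c\pmod{k}\}$, a complete residue class modulo $k$. The row of $\quorum_2$ is a block of $k$ consecutive slots, and a shift by $i$ keeps it a block of $k$ consecutive integers read modulo $\periodsize=k^2$; such a block hits every residue class modulo $k$ exactly once (since $k\mid k^2$, wrapping does not disturb residues mod $k$). Hence $\rotate(\quorum_2,i)$ contains a slot $\equiv c\pmod{k}$, which lies in the column of $\quorum_1$, giving $\quorum_1\cap\rotate(\quorum_2,i)\neq\varnothing$.

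The remaining case, the \emph{torus} quorum system, is the main obstacle and the part where I would lean most heavily on~\cite{jiang2005quorum}. Arrange the $\periodsize=p\cdot q$ slots on a $p$-column by $q$-row torus, a quorum being one full column plus a one-slot-per-column selection over a consecutive arc of just over $p/2$ columns, with the full column at one end of the arc. For unrotated quorums the intersection is a short pigeonhole: two such arcs each have length $>p/2$, so for any two quorums the full column of one sits inside the column-arc of the other, and a full column automatically contains the single slot the other quorum puts in that column. The delicate point is that rotation by $i$ re-indexes the torus with a carry between the row and column coordinates, so I must first verify that $\rotate(\quorum_2,i)$ still has this ``full column plus short consecutive arc'' shape — the full column survives as a full column, and the arc survives as a consecutive arc of the same length, only the in-arc row pattern bends at the wrap point — and then re-run the arc-overlap count, being careful with the parity of $p$, for which the two endpoint conditions just barely cover all of $\mathbb{Z}_p$. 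Once the shape is shown to be preserved, the unrotated pigeonhole applies to $\quorum_1$ and $\rotate(\quorum_2,i)$ and closes the argument.
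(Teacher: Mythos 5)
Your proposal is correct in substance, but it takes a different route from the paper for a simple reason: the paper does not prove this lemma at all --- Lemma~\ref{lemma:rotation satisfication} is stated as a known fact and discharged entirely by the citation to~\cite{jiang2005quorum}, whereas you supply a self-contained proof. Your cyclic argument (reduce, via the fact that rotation is an intersection-preserving bijection and rotations compose, to $\quorum^{\ast}\cap\rotate(\quorum^{\ast},\delta)\neq\varnothing$, which is exactly the relaxed difference-set property) and your grid argument (a column is a residue class modulo $\sqrt{\periodsize}$, a rotated row is $\sqrt{\periodsize}$ consecutive slots modulo $\periodsize$ and hence meets every such residue class, because $\sqrt{\periodsize}$ divides $\periodsize$) are complete and correct; note also that the quorums actually built by Rule~1 of this paper contain a full row and a full column, so your grid argument covers them as well. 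The torus case is the only place where your write-up stays at the level of a sketch: what you still owe is (i) fixing the linearization so that a column is precisely a residue class modulo the number of columns $p$, after which ``full column maps to full column'' and ``the tail keeps one slot in each of the $\lfloor p/2\rfloor$ columns following the full column'' are immediate, since rotation shifts all column indices by the same amount and the tail's defining property is insensitive to row positions; (ii) the degenerate case where the two full columns coincide; and (iii) the endpoint count you already flag, namely that if the forward column-distance $d$ from $\quorum_1$'s full column to $\quorum_2$'s exceeds $\lfloor p/2\rfloor$ then $p-d\leq\lceil p/2\rceil-1\leq\lfloor p/2\rfloor$, which works for both parities of $p$. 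With those details filled in, your argument stands on its own; what the paper's approach buys is brevity and an authoritative source, while yours buys verifiability within the paper and makes explicit exactly which structural features (difference sets, residue classes, arc overlap) the asynchronous-rendezvous guarantees of \protocol actually rest on.
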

\subsection{Problem Statement}\label{subsect:problem statement}
Two neighboring nodes $u$ and $v$ can communicate with each other in WSNs \emph{iff}  they have at least one rendezvous active time slot.
When a network is asynchronous, \ie the clock of each node $u$ has a shift $\clockshift^u\geq 0$ from real time, the set of $u$'s active time slots $\subperiod_u$ accordingly has a shift, \ie $\subperiod_u'=\subperiod_u+\{\clockshift^u\}$.
The following equation should be satisfied if a pair of neighboring nodes can communicate with each other under asynchronization.
\begin{equation}\label{equ: asynchronous common active}
 \forall u,v\in V\ \mbox{and}\ v\in\communicateset_u:\subperiod_u'\cap\subperiod_v'\neq \varnothing, \subperiod_u',\subperiod_v'\subset\period
\end{equation}
where $\communicateset_u$ is a \emph{communication set} centered at a node $u$ and a set containing $u$ and those nodes in its communication range.
Here, we call a pair of nodes as neighboring when they respectively belong to the communication set of each other.
Equation~(\ref{equ: asynchronous common active}) means a pair of neighboring nodes can be physically connected even
  under asynchronization if their active time slot sets are properly designed.

We define a parameter \emph{demand} $\demand$ to indicate the amount of data needed to transmit or receive in unit time.
Notice that Equation~(\ref{equ: asynchronous common active}) indicates a pair of neighboring nodes should have not only common active time but also enough time to finish all of its demand.
This paper aims to deriving a \emph{demand condition} so that each node $u$ can implement its demand by locally choosing a subset $\subperiod_u'\subset\period$ to guarantee each pair of nodes in $\communicateset_u$ satisfying Equation~(\ref{equ: asynchronous common active}).
To obtain the purpose, this paper designs the localized duty cycle adjustment scheme \protocol.
\section{Quorum System based Time Slot Assignment}
\label{sec:quorum system based time slot assignment}
This section presents our designing of \protocol and analyzes its properties.
These properties indicate \protocol can achieve better solution than existing protocols on fair medium
     access even under asynchronization as described in Section~\ref{section:introduction}.
\subsection{Designing of \protocol}\label{subsection:designin of qsts}
\protocol lets each node $u$ obtain a time slot set $\subperiod_u$ so that three following conditions can be satisfied:
\ding {172} Equation~(\ref{equ: asynchronous common active}); 
\ding {173} Demand condition; 
\ding {174} The active time is minimized to save energy.

\protocol consists of two steps.
The first step is to design quorums in a grid QS, denote by $\QS_g$.
In the second step, each node locally selects its quorum base on one-hop information about selected quorums.
\begin{figure}[h]\hspace{0.2cm}
\begin{minipage}[t]{0.45\linewidth}
\centering \includegraphics[scale=0.9, bb=235 371 357 464]{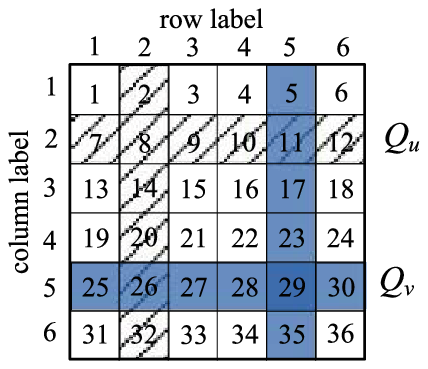}
\centering\caption{\footnotesize\label{fig:different_nodes_quorum} A grid QS $\QS_g$ contains $\period$. There are $\lceil\sqrt{\periodsize}\rceil$ rows and columns in $\QS_g$.}
\end{minipage}
\hspace{0.3cm}\begin{minipage}[t]{0.45\linewidth}
\centering \includegraphics[scale=0.9, bb=235 371 357 464]{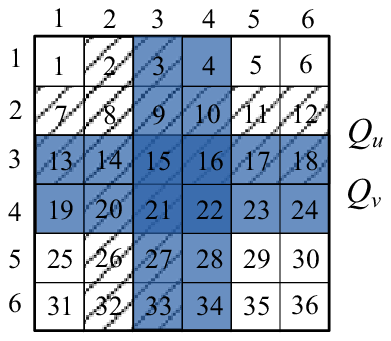}
\centering\caption{\footnotesize\label{fig:quorum overlap} $\quorum_u$ and $\quorum_v$ overlap at $3^{th}$ line and column.}
\end{minipage}
\end{figure}

At the first step, we construct a grid QS $\QS_g$ with the size
   $\lceil\sqrt{\periodsize}\rceil\times\lceil\sqrt{\periodsize}\rceil$
   based on a period $\period$ as shown in Figure~\ref{fig:different_nodes_quorum}.
The time slots from $\period$ are allocated into the grids of $\QS_g$ from right to left
   in a row-major manner as shown in Figure~\ref{fig:different_nodes_quorum}.
In each period, node $u$ requires a set of time slots $\subperiod_u$ to afford of its demand $\demand_u$.
The relation between the cardinality of $\subperiod_u$, denoted by $\slotsetsize_u$, and $\demand_u$
   is $\slotsetsize_u=\frac{\demand_u|\period|}{\datarate}$, where $\datarate$ is the data rate.
Thus we design a quorum $\quorum_u$ for $u$ and $\quorum_u$ contains
   $\lceil\frac{\demand_u\sqrt{\periodsize}}{2\datarate}\rceil$ rows and columns.

\emph{Rule 1 [Quorum Design]}:
The quorums $\quorum_i\in \QS_g$ are organized by from
   $i^{th}$ to $i+\lceil\frac{\demand_u\sqrt{\periodsize}}{2\datarate}\rceil^{th}$ rows and from $i^{th}$ to $i+\lceil\frac{\demand_u\sqrt{\periodsize}}{2\datarate}\rceil^{th}$ columns, where the row and column numbers are labeled as shown in Figure~\ref{fig:different_nodes_quorum} and $i\in\mathbb{Z}+$ and $i\leq\sqrt{\periodsize}+1-\lceil\frac{\demand_u\sqrt{\periodsize}}{2\datarate}\rceil$.

At the second step, we design a quorum selection method for each node as described in Algorithm~\ref{algorithm:quorum selection}, in which ``a quorum $\quorum_i$ is occupied" means that $|\quorum_i\cap\quorum_j|>\lceil\frac{\demand_i\demand_jm}{2\datarate^2}\rceil$, $i\neq j$ when $\quorum_i$ is selected earlier than $\quorum_j$.
Figure~\ref{fig:quorum overlap} shows that $\quorum_u$ is occupied by $\quorum_v$ if  $\quorum_u$ is selected earlier than $\quorum_v$.
Notice that the parameter $\selecttime$ in Algorithm~\ref{algorithm:quorum selection} will be discussed in Lemma~\ref{lemma:load upper bound} and~\ref{lemma:load multi k single ball}.
Its value is determined in advance.
\begin{algorithm}
\caption{Quorum Selection}
\label{algorithm:quorum selection}
\textbf{Input}: $\QS_g$, which is allocated time slots from $\period$.\\
\textbf{Output}: Each node $v\in\communicateset_u$ is allocated a quorum.

\begin{algorithmic}[1]
\STATE $v\in\communicateset_u$ sets a positive natural $k=1$;
\STATE $v$ sets a list $L_o$ storing the quorums occupied by others.
\WHILE{$\communicateset_u\neq\emptyset$}
\STATE $v$ randomly selects a quorum $\quorum_i$ with equal probability, where $i=1,\cdots,\lceil\sqrt{\periodsize}+1\rceil-\lceil\frac{\demand_u\sqrt{\periodsize}}{2\datarate}\rceil$;
\IF{\label{quorum selection if}$\quorum_i$ was occupied and $k\leq\selecttime$}
\STATE Add $\quorum_i$ into $L_o$;
\STATE $v$ selects another quorum $\quorum_j$, $i\neq j$ and $\quorum_j\notin L_o$; $k\ +=1$;
\ENDIF
\STATE Set $k=1$ and delete $v$ from $\communicateset_u$.
\ENDWHILE
\STATE $v$ informs its neighbors of its quorum in a message.
\end{algorithmic}
\end{algorithm}

\subsection{Properties of \protocol}
The properties of \protocol includes \emph{physical connectivity}, \emph{maximum demand} and \emph{maximum load}.
We analyze the effect of \protocol on network connectivity when the clocks are either synchronous or asynchronous.
\\\textbf{The physical connectivity.}
Physical connectivity is the preliminary condition under which nodes can communicate with each other and implement their demand.
In this paper, physical connectivity  means the \emph{time-to-rendezvous},
which is the amount of rendezvous slots between arbitrary nodes $u$ and $v$.
If a node $v\in\communicateset_u$ and $u$ are physically connected, the time-to-rendezvous between them should be at least one slot.

By \protocol, the physical connectivity under both clock synchronization and asynchronization are different.
When the clock is synchronous, we can easily obtain Lemma~\ref{lemma:synchronous QSTS conncectivity}.
For example, two quorums $\quorum_u$ and $\quorum_v$ respectively contain one row and one column in Figure~\ref{fig:different_nodes_quorum}.
They rendezvous at the time slot 11 and 26.
Thus they are physically connected and have two rendezvous active time slots.
If one node (for example $v$) chooses a used quorum $\quorum_u$ as the instance in Line~\ref{quorum selection if} of Algorithm~\ref{algorithm:quorum selection},
then two quorums (for example, $\quorum_v$ and $\quorum_u$) have more than two rendezvous time slots.
\begin{lemma}\label{lemma:synchronous QSTS conncectivity}
If two arbitrary nodes $u$ and $v\in\communicateset_u$  are respectively allocated quorums $\quorum_u$ and $\quorum_v$ according to Algorithm~\ref{algorithm:quorum selection},
they have at least $\lceil\frac{\demand_u\demand_vm}{2\datarate^2}\rceil$ rendezvous active time slots when clocks are synchronous.
\end{lemma}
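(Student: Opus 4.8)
The plan is to count the rendezvous slots directly from the grid structure. Recall from Rule~1 that node $u$'s quorum $\quorum_u$ consists of all cells lying in a block of $a_u := \lceil \demand_u\sqrt{\periodsize}/(2\datarate)\rceil$ consecutive rows together with all cells lying in a block of $a_u$ consecutive columns of the $\lceil\sqrt{\periodsize}\rceil\times\lceil\sqrt{\periodsize}\rceil$ grid $\QS_g$; similarly $\quorum_v$ is an $a_v$-row, $a_v$-column cross. First I would identify the intersection $\quorum_u \cap \quorum_v$ as a union of rectangular patches: the rows of $\quorum_u$ meet the columns of $\quorum_v$ in an $a_u\times a_v$ patch, and the columns of $\quorum_u$ meet the rows of $\quorum_v$ in an $a_v\times a_u$ patch. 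Since each grid cell holds exactly one time slot (the slots of $\period$ are laid out one per cell in row-major order), the number of rendezvous slots is at least $a_u a_v$, coming from just one of these patches. Substituting the definitions of $a_u$ and $a_v$ and using $\lceil x\rceil\lceil y\rceil \geq \lceil xy\rceil$ gives
\[
a_u a_v = \Bigl\lceil \tfrac{\demand_u\sqrt{\periodsize}}{2\datarate}\Bigr\rceil \Bigl\lceil \tfrac{\demand_v\sqrt{\periodsize}}{2\datarate}\Bigr\rceil \;\geq\; \Bigl\lceil \tfrac{\demand_u\demand_v \periodsize}{4\datarate^2}\Bigr\rceil,
\]
which is a bit smaller than the claimed $\lceil \demand_u\demand_v m /(2\datarate^2)\rceil$, so I would need to recover the missing factor of $2$ by also counting the second patch (the two patches are disjoint as long as the row-block of $\quorum_u$ and the row-block of $\quorum_v$ do not coincide; if they do coincide one instead gets an even larger overlap along shared full rows), giving $2 a_u a_v$ rendezvous slots and hence the stated bound.

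Next I would handle the tie-breaking clause mentioned just before the lemma: if $v$ ends up selecting a quorum that was already occupied — the situation in Line~\ref{quorum selection if} of Algorithm~\ref{algorithm:quorum selection} when $k$ exceeds $\selecttime$ — then $\quorum_v$ may share many rows/columns with $\quorum_u$, which only increases $|\quorum_u\cap\quorum_v|$; so the lower bound is preserved a fortiori. I would state this as a short remark rather than a separate case analysis. The hypothesis is exactly that both nodes receive quorums from Algorithm~\ref{algorithm:quorum selection}, so $\quorum_u,\quorum_v$ are genuine grid crosses of the prescribed widths, which is all the counting argument uses.

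The main obstacle is bookkeeping with the ceilings and the exact combinatorial geometry of overlapping crosses on a grid whose side $\lceil\sqrt{\periodsize}\rceil$ need not divide evenly into the block widths — in particular making sure the two patches really do contribute $2a_ua_v$ distinct slots and that the rewriting $\lceil x\rceil\lceil y\rceil\geq \lceil 2xy\rceil/2$-type manipulation actually lands on $\lceil \demand_u\demand_v m/(2\datarate^2)\rceil$ rather than a constant-factor-weaker bound. I expect the clean way to present it is: (i) reduce to the single-row/single-column picture to see where the slots sit, (ii) observe each cross has $a_u$ rows and $a_u$ columns, (iii) count the two off-diagonal intersection rectangles, (iv) finish with the ceiling inequality. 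Since clocks are synchronous here, no rotation is involved and Lemma~\ref{lemma:rotation satisfication} is not needed; it will be invoked only in the asynchronous counterpart.
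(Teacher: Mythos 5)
Your counting argument is correct and is essentially the paper's own (implicit) justification: the paper gives no formal proof of this lemma, only the remark that it is ``easily obtained'' plus the one-row/one-column example with two rendezvous slots, and your two off-diagonal $a_u\times a_v$ patches are exactly the formalization of that picture, with $2a_ua_v\geq\lceil\frac{\demand_u\demand_v\periodsize}{2\datarate^2}\rceil$ recovering the stated bound. The one point to tighten is the partial-overlap case you flag as bookkeeping: when the row blocks of $\quorum_u$ and $\quorum_v$ intersect without coinciding, the two patches are not disjoint, but the shared full rows contribute roughly $p\lceil\sqrt{\periodsize}\rceil$ additional slots, and since $a_u+a_v\leq\lceil\sqrt{\periodsize}\rceil$ under the paper's demand bounds this restores the $2a_ua_v$ count, so your conclusion stands.
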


When clocks are asynchronous, a node $u$ is prone to have clock shift $\clockshift^u$, which is the difference between the local time of $u$'s clock and the exact time.
Therefore, the relative clock shift between $u$ and $v$ is $\clockshift(u,v)=\clockshift^u-\clockshift^v$.
This paper always assumes $\clockshift^u, \clockshift(u,v)<+\infty$.
\begin{lemma}\label{lemma:asynchronous QSTS conncectivity}
Any pair of quorums in the same QS $\QS_g$ must have at least $\lceil\frac{\demand_u\demand_vm}{4\datarate^2}\rceil$ rendezvous active time slots
even when the relative clock shift between any pair of nodes is an arbitrary value.
\end{lemma}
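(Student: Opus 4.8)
The plan is to absorb the clock shift into a rotation of one of the two quorums, reducing the claim to a statement about $|\quorum_u\cap\rotate(\quorum_v,i)|$ over all shifts $i$, and then to lower bound this quantity by a ``one slot per row'' count that mirrors the mechanism behind the rotation closure property (Lemma~\ref{lemma:rotation satisfication}).

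First I would observe that a relative clock shift behaves, modulo the period length $\periodsize$, precisely as a rotation. Writing $\subperiod_u'=\subperiod_u+\{\clockshift^u\}$ and $\subperiod_v'=\subperiod_v+\{\clockshift^v\}$ and translating both sets by $-\clockshift^u$ (a bijection of $\period$ that preserves the cardinality of any intersection), one gets $|\subperiod_u'\cap\subperiod_v'|=|\quorum_u\cap\rotate(\quorum_v,i)|$ with $i\equiv-\clockshift(u,v)\pmod{\periodsize}$. Since $\clockshift(u,v)$ is arbitrary but finite, it therefore suffices to prove $|\quorum_u\cap\rotate(\quorum_v,i)|\ge\lceil\frac{\demand_u\demand_v\periodsize}{4\datarate^2}\rceil$ for every $i\in\{0,\dots,\periodsize-1\}$.

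Next comes the counting. By Rule~1, $\quorum_u$ is the union of $a_u:=\lceil\frac{\demand_u\sqrt{\periodsize}}{2\datarate}\rceil$ entire rows (and as many entire columns) of the grid, and likewise $\rotate(\quorum_v,i)$ contains the images under $\rotate(\cdot,i)$ of $a_v:=\lceil\frac{\demand_v\sqrt{\periodsize}}{2\datarate}\rceil$ entire columns of the grid. The key geometric fact is that the image of a single grid column under any rotation still meets every grid row in at least one slot; this is exactly what makes the grid QS rotation-closed, since a column is an arithmetic progression of step $\lceil\sqrt{\periodsize}\rceil$ and $\lceil\sqrt{\periodsize}\rceil$ consecutive such terms sweep out all rows. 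Because $\rotate(\cdot,i)$ is a bijection, the $a_v$ rotated columns inside $\rotate(\quorum_v,i)$ remain pairwise disjoint, so each of the $a_u$ rows of $\quorum_u$ contains at least $a_v$ slots of $\rotate(\quorum_v,i)$ --- one from each rotated column --- and slots picked up in different rows of $\quorum_u$ are automatically distinct. Hence $|\quorum_u\cap\rotate(\quorum_v,i)|\ge a_u a_v\ge\frac{\demand_u\sqrt{\periodsize}}{2\datarate}\cdot\frac{\demand_v\sqrt{\periodsize}}{2\datarate}=\frac{\demand_u\demand_v\periodsize}{4\datarate^2}$, and as the left-hand side is a nonnegative integer it is at least $\lceil\frac{\demand_u\demand_v\periodsize}{4\datarate^2}\rceil$. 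This also explains the loss of a factor of two compared with Lemma~\ref{lemma:synchronous QSTS conncectivity}: when $i=0$ one additionally collects the slots where the rows of $\quorum_u$ meet the columns of $\quorum_v$, doubling the count.

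I expect the main obstacle to be making the ``rotated column meets every row'' fact fully rigorous when $\periodsize$ is not a perfect square: then the grid has side $\lceil\sqrt{\periodsize}\rceil$ but only $\periodsize$ of its cells carry slots (the last row is short) and rotations are taken modulo $\periodsize$ rather than modulo $\lceil\sqrt{\periodsize}\rceil^2$, so the arithmetic above need not hold verbatim near the wrap-around. I would deal with this either by padding $\period$ with dummy slots up to the next perfect square and checking that the bound degrades only by an additive $O(\sqrt{\periodsize})$ term, or by simply assuming $\periodsize$ is a perfect square, as the construction tacitly does; everything else is routine bookkeeping. A secondary, minor point is that a fractional clock offset only enlarges the overlap of active intervals, so it is harmless to restrict $\clockshift(u,v)$ to an integer number of slots.
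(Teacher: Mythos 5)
Your proposal is correct and reaches the stated bound, but it takes a more self-contained route than the paper. The paper's proof shares your first step verbatim: it absorbs the relative clock shift $\clockshift(u,v)$ into a rotation $\rotate(\quorum_v,\clockshift(u,v))$ and invokes the rotation closure property (Lemma~\ref{lemma:rotation satisfication}) only to conclude that the intersection with $\quorum_u$ is non-empty; for the quantitative lower bound $|\rotate(\quorum_v,\clockshift(u,v))\cap\quorum_u|\geq\lceil\frac{\demand_u\demand_v\periodsize}{4\datarate^2}\rceil$ it simply cites Theorem~3 of the quorum-rendezvous literature together with Lemma~\ref{lemma:synchronous QSTS conncectivity}, without any internal counting. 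You instead prove that bound directly: you observe that (for $\periodsize$ a perfect square) a cyclic shift maps each full grid column to a full grid column, so $\rotate(\quorum_v,i)$ still contains $\lceil\frac{\demand_v\sqrt{\periodsize}}{2\datarate}\rceil$ pairwise disjoint columns, each meeting every one of the $\lceil\frac{\demand_u\sqrt{\periodsize}}{2\datarate}\rceil$ rows of $\quorum_u$, giving at least $a_ua_v\geq\frac{\demand_u\demand_v\periodsize}{4\datarate^2}$ intersection slots. What your approach buys is a proof that is independent of the external theorem and of Lemma~\ref{lemma:synchronous QSTS conncectivity}, and it makes explicit where the factor $2$ relative to the synchronous case is lost; what it costs is that you must confront the non-perfect-square case (short last row, rotation modulo $\periodsize$ rather than modulo $\lceil\sqrt{\periodsize}\rceil^2$), which you flag honestly and which the paper silently inherits from its citations, so your padding or perfect-square assumption is an acceptable resolution consistent with the paper's own implicit assumption.
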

\begin{proof}
A grid QS $\QS_g$ satisfies the rotation closure property according to Lemma~\ref{lemma:rotation satisfication}.
Thus any two quorums $\quorum_u$ and $\quorum_v$ satisfy $\quorum_u\cap\rotate(\quorum_v,i)\neq \varnothing$, where $i=0,\cdots,\periodsize-1$, when $\quorum_u$, $\quorum_v\in \QS_g$.
For any pair of nodes $u$ and $v$, there is relative clock shift $\clockshift(u,v)$ because of the clock asynchronization.
Without loss of generality, let the rotation of $\quorum_v$ be $\rotate(\quorum_v,\clockshift(u,v))$.
It means that the slots in $\quorum_v$ shift because of the relative clock shift.
Thus $\rotate(\quorum_v,\clockshift(u,v))\cap\quorum_u\neq\varnothing$ for $i=0,\cdots,\periodsize-1$ if $v\in\communicateset_u$ since $\clockshift(u,v)\ mod\ \periodsize$ is positive and not bigger than $\periodsize-1$.
It means $u$ and $v$ are physically connected because a period totally contains $\periodsize$ time slots.

Next we look for a lower bound of the cardinality of $\rotate(\quorum_v,\clockshift(u,v))\cap\quorum_u$.
In \protocol, any two quorums in the same QS has at least $\lceil\frac{\demand_u\demand_vm}{4\datarate^2}\rceil$ rendezvous active time slots according to Theorem 3 of~\cite{bian2009quorum} and Lemma~\ref{lemma:synchronous QSTS conncectivity},
\ie $|\rotate(\quorum_v,\clockshift(u,v))\cap\quorum_u|\geq\lceil\frac{\demand_u\demand_vm}{4\datarate^2}\rceil$.
\end{proof}

Notice that Lemma~\ref{lemma:asynchronous QSTS conncectivity} is obtained without considering the interference since our scheme \protocol is applied to MAC.
Thus if $|\clockshift(u,v)|$ \emph{mod} $\periodsize$ =0 for any pair of nodes $u$ and $v$, then the physical connectivity of \protocol under the clock asynchronization is same with that under the clock synchronization, and the same result can be obtained on the parameter load.
In the subsection~\ref{subsection:constructing a tree}, we will analyze the effect of the interference.
%
%
%

Notice that other kind of QSs are also applicable in \protocol according to Lemma~\ref{lemma:rotation satisfication} in spite that clocks are synchronous or asynchronous.
If a QS should satisfy the rotate closure property, the quorums in the QS should satisfy the condition in Lemma~\ref{lemma:quorum cardinality in closure property}.
\begin{lemma}\label{lemma:quorum cardinality in closure property}
If a QS $\QS$ satisfies the rotate closure property, then the cardinality of any quorum in $\QS$ must be more than $\sqrt{\periodsize}$~\cite{jiang2005quorum}.
\end{lemma}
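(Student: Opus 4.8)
The plan is to apply the rotation closure property in the degenerate case where the two quorums coincide, and then run a counting argument on difference sets. Fix an arbitrary quorum $\quorum\in\QS$. Applying $\quorum_1\cap\rotate(\quorum_2,i)\neq\varnothing$ with $\quorum_1=\quorum_2=\quorum$ shows that for every $i\in\{0,\dots,\periodsize-1\}$ there exist $a,b\in\quorum$ with $a\equiv b+i\pmod{\periodsize}$. Equivalently, the difference set $\mathcal{D}(\quorum):=\{(a-b)\bmod\periodsize : a,b\in\quorum\}$ equals all of $\{0,\dots,\periodsize-1\}$, so $|\mathcal{D}(\quorum)|=\periodsize$.

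Next I would bound $|\mathcal{D}(\quorum)|$ from above. There are exactly $|\quorum|^2$ ordered pairs $(a,b)$ with $a,b\in\quorum$, and each contributes one residue to $\mathcal{D}(\quorum)$; the $|\quorum|$ pairs with $a=b$ all contribute the single residue $0$, so at most $|\quorum|^2-|\quorum|+1$ distinct residues can arise. Combining with the lower bound from the first step gives $|\quorum|^2-|\quorum|+1\geq\periodsize$.

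Finally I would upgrade this to the strict inequality. First I would dispose of the case $|\quorum|=1$: a singleton $\{a\}$ has $\rotate(\{a\},1)=\{(a+1)\bmod\periodsize\}$, which misses $a$ whenever $\periodsize\geq2$, contradicting rotation closure (and $\periodsize=1$ is vacuous). Hence $|\quorum|\geq2$, and then $|\quorum|^2\geq\periodsize+|\quorum|-1\geq\periodsize+1>\periodsize$, i.e. $|\quorum|>\sqrt{\periodsize}$, which is the claim.

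The only delicate point is extracting the strict bound rather than merely $|\quorum|\geq\sqrt{\periodsize}$: the naive estimate $|\quorum|^2\geq|\mathcal{D}(\quorum)|=\periodsize$ is not enough, and the improvement comes precisely from noting that the $|\quorum|\geq2$ diagonal pairs are all spent on the zero residue. An equivalent phrasing, avoiding the up-front case split on $|\quorum|=1$, is to observe that the $\periodsize-1$ nonzero residues must each be realized as a difference of two \emph{distinct} elements of $\quorum$, of which there are only $|\quorum|(|\quorum|-1)$ ordered choices, so $|\quorum|(|\quorum|-1)\geq\periodsize-1$, which again forces $|\quorum|>\sqrt{\periodsize}$.
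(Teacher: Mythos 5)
Your argument is correct (for $\periodsize\geq 2$), and it necessarily takes a different route from the paper, because the paper gives no proof at all: Lemma~\ref{lemma:quorum cardinality in closure property} is simply imported by citation from~\cite{jiang2005quorum}. What you have reconstructed is essentially the standard difference-set counting argument that underlies that citation. Specializing the rotation closure property to the degenerate pair $\quorum_1=\quorum_2=\quorum$ is legitimate (the property is quantified over all pairs in $\QS$, equal ones included) and correctly yields that every residue $i\in\{0,\dots,\periodsize-1\}$ is a difference of two elements of $\quorum$ modulo $\periodsize$; counting ordered pairs, with the key observation that the $|\quorum|$ diagonal pairs are all spent on the residue $0$, gives $|\quorum|(|\quorum|-1)\geq\periodsize-1$ and hence the strict bound $|\quorum|>\sqrt{\periodsize}$ once $\periodsize\geq 2$. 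The gain of your write-up over the paper's bare citation is that the lemma becomes self-contained, and in fact slightly stronger: you only use self-intersection of a quorum with its own rotations, so the bound holds quorum-by-quorum for any set whose rotations all meet it. One small imprecision: dismissing $\periodsize=1$ as ``vacuous'' is not quite right, since for $\periodsize=1$ a singleton quorum does satisfy rotation closure while having cardinality exactly $\sqrt{\periodsize}$, so the strict inequality genuinely fails there; this is a boundary defect of the lemma's statement rather than of your argument (the paper itself assumes $\periodsize>1$ in the places where it uses this bound, e.g.\ Lemma~\ref{lemma:demand lower bound}), but it should be stated as an assumption rather than waved away.
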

\textbf{The maximum demand.} Here the demand condition is given so each node $u$ can afford its demand $\demand_u$.
It is easy to know $\demand_u\leq\datarate$ if the demand $\demand_u$ is implementable.
When the interference models $\interferenceM$  presented in the subsection~\ref{subsection:network model} are considered, the demand $\demand_u$ cannot necessarily be close to $\datarate$.
The summation of all nodes in the same communication set $\communicateset_u$ should satisfy the condition in
Lemma~\ref{lemma:bound on demand} if all of the demand of nodes in $\communicateset_u$ can be implemented.
Before we give out Lemma~\ref{lemma:bound on demand}, we introduce a constant $\coninterf(\interferenceM)$ which is determined by the interference model $\interferenceM$.
We calculate  $\coninterf(\interferenceM)$ by the technique of vertex coloring.
The vertex coloring means to color all nodes with minimal number of colors under $\interferenceM$.
Thus nodes with same color are interference-free under $\interferenceM$.
\begin{lemma}\label{lemma:bound on demand}
When  demands of all nodes are implementable, the demands of nodes belonging to the same $\communicateset_u$ should satisfy $\sum\limits_{v\in\communicateset_u}\demand_v\leq \frac{\datarate}{\coninterf(\interferenceM)}$, where $\coninterf(\interferenceM)$ is a constant related with the interference model $\interferenceM$.
\end{lemma}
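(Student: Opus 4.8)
The plan is to read the inequality as a capacity constraint on the one‑hop neighbourhood of $u$ and to establish it by a counting/pigeonhole argument run contrapositively, starting from the hypothesis that a valid schedule exists. Using the cardinality–demand identity $\slotsetsize_v=\demand_v|\period|/\datarate=\demand_v\periodsize/\datarate$ built into the \protocol construction, implementability of $\demand_v$ means that $v$ is assigned an active‑slot set $\subperiod_v\subset\period$ with $|\subperiod_v|=\slotsetsize_v$, and that every one of $v$'s data exchanges is carried inside those $\slotsetsize_v$ slots. After multiplying by $\datarate/\periodsize$, the target inequality is equivalent to $\sum_{v\in\communicateset_u}\slotsetsize_v\le\periodsize/\coninterf(\interferenceM)$, i.e.\ to the assertion that all the active slots jointly owned by a single communication set fit, after the interference‑induced factor $\coninterf(\interferenceM)$, inside the $\periodsize$ slots of one period.

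First I would build the conflict graph $H_u$ on the vertex set $\communicateset_u$, joining two nodes whenever, under the interference model $\interferenceM$, their communications cannot both succeed in the same slot, and recall that $\coninterf(\interferenceM)$ is exactly what the stated vertex‑colouring computation produces on graphs of this type (nodes of one colour being interference‑free under $\interferenceM$). The crucial geometric fact is that every member of $\communicateset_u$ lies within communication range — hence within interference range — of the common node $u$, so $H_u$ is very dense: in the RTS/CTS model and the protocol model any two members of $\communicateset_u$ are adjacent, since a second concurrent transmission in the vicinity of $u$ always violates the carrier‑sense condition or the $(1+\interfpara)$‑separation condition, while in the physical/SINR model a short Euclidean packing argument caps by an absolute constant the number of members whose transmissions can be decoded simultaneously near $u$. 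In either case the nodes of $\communicateset_u$ split into at most $\coninterf(\interferenceM)$ classes such that, in any single slot, the members of $\communicateset_u$ that are actually communicating all lie in one class; equivalently, the classes must draw their active slots from pairwise‑disjoint sub‑pools of $\period$, and the tx/rx double‑counting of a rendezvous slot at its two endpoints is likewise absorbed into $\coninterf(\interferenceM)$.

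Then I would run the count. Fix any schedule that implements every demand and satisfies Equation~(\ref{equ: asynchronous common active}). Because the $\coninterf(\interferenceM)$ classes use disjoint slot sub‑pools and $|\period|=\periodsize$, pigeonhole yields $\coninterf(\interferenceM)\sum_{v\in\communicateset_u}\slotsetsize_v\le\periodsize$: if this failed, two conflicting members of $\communicateset_u$ would be forced to share a slot, so one of them could not collect its full quota of $\slotsetsize_v$ active slots and its demand would be unimplementable, contradicting the hypothesis. Dividing by $\periodsize$ and rescaling by $\datarate$ gives $\sum_{v\in\communicateset_u}\demand_v\le\datarate/\coninterf(\interferenceM)$. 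Traffic that a member of $\communicateset_u$ exchanges with a partner \emph{outside} $\communicateset_u$ only blocks additional airtime around $u$, so it can only tighten the bound and needs no separate treatment.

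The main obstacle is making the passage from the interference model to the single constant $\coninterf(\interferenceM)$ precise and uniform across the three models. For RTS/CTS and the protocol model this is essentially immediate, since $\communicateset_u$ behaves as a clique in $H_u$ and the argument collapses to a one‑line pigeonhole; the honest work is only the geometric verification that a second nearby concurrent transmission is always forbidden. The genuinely delicate case is the physical model, where interference is additive and global so there is no pairwise conflict to invoke: there I would fall back on the standard SINR fact that the number of simultaneously successful transmitters inside a ball whose radius is on the order of the communication range is bounded by a constant depending only on the SINR threshold and the path‑loss exponent, fold that constant into $\coninterf(\interferenceM)$, and then the counting above closes the proof.
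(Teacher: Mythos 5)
Your overall reduction---convert demands to slot counts via $\slotsetsize_v=\demand_v\periodsize/\datarate$ and then count slots inside one period---matches the spirit of the paper's argument, but the step where the factor $\coninterf(\interferenceM)$ enters does not hold up. You place the colouring \emph{inside} $\communicateset_u$: splitting its members into at most $\coninterf(\interferenceM)$ interference classes with pairwise-disjoint slot sub-pools yields at best $\sum_{v\in\communicateset_u}\slotsetsize_v\le\periodsize$ (and if nodes of one class may genuinely share slots, intra-set spatial reuse would only \emph{increase} the admissible demand, pushing in the wrong direction). Your claimed pigeonhole inequality $\coninterf(\interferenceM)\sum_{v\in\communicateset_u}\slotsetsize_v\le\periodsize$ does not follow from ``two conflicting members would be forced to share a slot'': that argument forces only $\sum_{v}\slotsetsize_v\le\periodsize$, i.e.\ $\sum_{v\in\communicateset_u}\demand_v\le\datarate$, which is the interference-free bound, not the stated $\datarate/\coninterf(\interferenceM)$. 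Indeed, in the RTS/CTS and protocol models you yourself observe that $\communicateset_u$ behaves as a clique, so your construction has a single class and the factor $\coninterf(\interferenceM)$ disappears from your own derivation.

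The missing idea---and the one the paper actually uses---is that $\coninterf(\interferenceM)$ accounts for interference coming from \emph{outside} $\communicateset_u$: the vertex colouring is over all nodes of the network, and communication sets that interfere with one another must time-share the channel, so $\communicateset_u$ can be active only in one period out of every $\coninterf(\interferenceM)$. The airtime available to the whole set is therefore $\periodsize$ slots per $\coninterf(\interferenceM)$ periods, i.e.\ an average rate of $\datarate/\coninterf(\interferenceM)$, and the simple intra-set counting ($\sum_v\slotsetsize_v$ at most the available slots) then gives the lemma. Your remark that traffic involving partners outside $\communicateset_u$ ``only blocks additional airtime \ldots and needs no separate treatment'' dismisses precisely the mechanism that produces the $1/\coninterf(\interferenceM)$ factor; without it your proof establishes only $\sum_{v\in\communicateset_u}\demand_v\le\datarate$. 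Your SINR packing remark for the physical model is reasonable in itself, but it bounds simultaneous transmissions near $u$, which again concerns the intra-set constraint rather than the inter-set time sharing that the constant is meant to capture.
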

\begin{proof}
When no interference is involved, the nodes in $\communicateset_u$ share a period.
So $\sum\limits_{v\in\communicateset_u}\slotsetsize_v\leq T\Rightarrow
\sum\limits_{v\in\communicateset_u}\frac{\demand_v\period}{\datarate}\leq T\Rightarrow
\sum\limits_{v\in\communicateset_u}\demand_v\leq \datarate$.
Under the interference model $\interferenceM$, each communicate set can transmit or receive in every $\coninterf(\interferenceM)$ periods in order to be interference-free.
The average data rate is $\frac{\datarate}{\coninterf(\interferenceM)}$.
So $\sum\limits_{v\in\communicateset_u}\demand_v\leq \frac{\datarate}{\coninterf(\interferenceM)}$.
\end{proof}

We find it is not always suitable to decrease the maximum demand of each node since a node should keep active in at least $\sqrt{\periodsize}$ time slot to satisfy the rotate closure property according to Lemma~\ref{lemma:quorum cardinality in closure property}.
Hence demand of a node $u$ should have a lower bound.
Since the $\quorum_u$ contains $\lceil\frac{\demand_u\sqrt{\periodsize}}{2\datarate}\rceil$ rows and a column, the lower bound can be obtained by Lemma~\ref{lemma:demand lower bound}.
\begin{lemma}\label{lemma:demand lower bound}
When the $\QS_g$ satisfies the rotation closure property, the demand of each node should satisfy $\demand_u\geq \conone\datarate$, where $\conone=1-\sqrt{1-1/\sqrt{\periodsize}}$, when $\periodsize>1$.
\end{lemma}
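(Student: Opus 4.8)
The plan is to read Lemma~\ref{lemma:demand lower bound} as a consequence of Lemma~\ref{lemma:quorum cardinality in closure property}: rotation closure can hold only when every quorum carries more than $\sqrt{\periodsize}$ slots, so it suffices to locate the threshold in $\demand_u$ below which the quorum $\quorum_u$ that Rule~1 assigns to $u$ already fails that cardinality test. Concretely, I would fix $u$, write $|\quorum_u|$ as an explicit function of $\demand_u$, impose $|\quorum_u|>\sqrt{\periodsize}$, and solve for $\demand_u$.

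For the count: by Rule~1, $\quorum_u$ is the union of $a$ consecutive rows and $a$ consecutive columns of the $\lceil\sqrt{\periodsize}\rceil\times\lceil\sqrt{\periodsize}\rceil$ grid $\QS_g$ carrying $\period$, with $a=\lceil\frac{\demand_u\sqrt{\periodsize}}{2\datarate}\rceil$. By inclusion--exclusion---the $a\times a$ corner block where the chosen rows meet the chosen columns is counted twice---one gets $|\quorum_u|=2a\lceil\sqrt{\periodsize}\rceil-a^{2}$, a downward parabola in $a$ that increases over the range $[0,\lceil\sqrt{\periodsize}\rceil]$ we live in (since $\demand_u\le\datarate$ keeps $a$ near $\sqrt{\periodsize}/2$). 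Setting $b=a/\sqrt{\periodsize}$ and dividing $|\quorum_u|\ge\sqrt{\periodsize}$ by $\periodsize$ (ignoring the $\lceil\cdot\rceil$ rounding, which only helps), the requirement becomes
\begin{equation*}
2b-b^{2}=1-(1-b)^{2}\ \ge\ \frac{1}{\sqrt{\periodsize}},
\end{equation*}
hence $(1-b)^{2}\le 1-1/\sqrt{\periodsize}$. Since $b\le 1$ (indeed $b\le\tfrac12$ up to rounding, by $\demand_u\le\datarate$), taking the positive square root gives $b\ge 1-\sqrt{1-1/\sqrt{\periodsize}}=\conone$; equivalently, $\quorum_u$ breaks rotation closure unless it uses at least $\conone\sqrt{\periodsize}$ rows and columns.

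It then remains to translate the row count into demand through Rule~1's sizing, $\slotsetsize_u=\demand_u|\period|/\datarate$ and $a=\lceil\frac{\demand_u\sqrt{\periodsize}}{2\datarate}\rceil$, turning $a\ge\conone\sqrt{\periodsize}$ into $\demand_u\ge\conone\datarate$. The hypothesis $\periodsize>1$ is used exactly here: it forces $1/\sqrt{\periodsize}<1$, so $\conone\in(0,1)$ and the bound is meaningful; at $\periodsize=1$ one has $\conone=1$ and the statement collapses to $\demand_u=\datarate$.

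I expect the delicate points to be the geometry of this ``thick-cross'' quorum---verifying that $|\quorum_u|=2a\lceil\sqrt{\periodsize}\rceil-a^{2}$ is the correct count for Rule~1 and that the ceilings (on $\sqrt{\periodsize}$ and on $a$) can only enlarge $|\quorum_u|$, so that dropping them still yields a valid necessary condition---and the selection of the admissible root of the quadratic, which is pinned down by $a\le\lceil\sqrt{\periodsize}\rceil$ together with the earlier fact $\demand_u\le\datarate$. With those settled, completing the square does the rest.
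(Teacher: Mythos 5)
Your argument is essentially the paper's own proof: it derives $|\quorum_u|=2\sqrt{\periodsize}\,a-a^{2}$ with $a=\lceil\frac{\demand_u\sqrt{\periodsize}}{2\datarate}\rceil$ by inclusion--exclusion, imposes $|\quorum_u|\geq\sqrt{\periodsize}$ via Lemma~\ref{lemma:quorum cardinality in closure property}, solves the resulting quadratic by completing the square (your substitution $b=a/\sqrt{\periodsize}$ is only a cosmetic reparametrization), and separates the case $\periodsize=1$. Your handling of the ceilings and of the constant in the final conversion from $a\geq\conone\sqrt{\periodsize}$ to $\demand_u\geq\conone\datarate$ is exactly as loose as the paper's own algebra, so the proposal matches the paper's proof in both approach and level of rigor.
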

\begin{proof}
For an arbitrary node $u$, the cardinality of $\quorum_u$ is $2\sqrt{\periodsize}\lceil\frac{\demand_u\sqrt{\periodsize}}{2\datarate}\rceil-\lceil\frac{\demand_u\sqrt{\periodsize}}{2\datarate}\rceil\times\lceil\frac{\demand_u\sqrt{\periodsize}}{2\datarate}\rceil$.
By Lemma~\ref{lemma:quorum cardinality in closure property}, we have $2\sqrt{\periodsize}\lceil\frac{\demand_u\sqrt{\periodsize}}{2\datarate}\rceil-\lceil\frac{\demand_u\sqrt{\periodsize}}{2\datarate}\rceil\times\lceil\frac{\demand_u\sqrt{\periodsize}}{2\datarate}\rceil\geq\sqrt{\periodsize}$.
$\Rightarrow$$2\lceil\frac{\demand_u\sqrt{\periodsize}}{2\datarate}\rceil-\lceil\frac{\demand_u}{2\datarate}\rceil\times\lceil\frac{\demand_u\sqrt{\periodsize}}{2\datarate}\rceil\geq 1$.
When $\periodsize=1$, the above inequality can be always satisfied because $\demand_u/\datarate\leq 1$.
When $\periodsize>1$, $\demand_u\geq \conone\datarate$, where $\conone=1-\sqrt{1-1/\sqrt{\periodsize}}$.
\end{proof}

Let $D_k=\min\limits_{v\in\communicateset_u}\demand_v$ so $D_k\geq \conone\datarate$ according to Lemma~\ref{lemma:demand lower bound}.
Thus we have $\sum\limits_{v\in\communicateset_u}\demand_v\geq \conone|\communicateset_u|\datarate$.
\\\textbf{The load bound.}
When the condition in Lemma~\ref{lemma:bound on demand} is satisfied, there still exists competition between a pair of nodes $u$ and $v$ in the same $\communicateset$,
where the competition between them because $|\quorum_u\cap\quorum_v|>\lceil\frac{\demand_u\demand_vm}{2\datarate^2}\rceil$, $u\neq v$, as described in Algorithm~\ref{algorithm:quorum selection}.
Notice that the parameter $\selecttime$ in Algorithm~\ref{algorithm:quorum selection} is used to decrease the competition, where $\selecttime$ is the number of times in which the same node selects different quorums.
It is important to take full advantage of the time diversity of medium access, \ie to decrease the competition between different quorums, in order to control the channel congestion.
Some previous work designed protocol to minimize the load~\cite{bian2009quorum}.
But it is not suitable under a more practical case in this paper.
That is each node has  demand different from others' because of their different network tasks.
Furthermore, we show that it cannot fully use the time slots when the demand is very low, under which it will degrade the channel utilization to decrease the load.
In \protocol, we present the upper- and lower-bound of the load under the demand constraint given in Lemma~\ref{lemma:bound on demand} and \ref{lemma:demand lower bound}.

We first give out a lower bound of the quorum load.
Two propositions 4.1 and 4.2 in~\cite{naor1994load} gave a result that $\load(\QS)\geq\max\{\frac{1}{\QSsizemin(\QS)},\frac{\QSsizemin(\QS)}{\periodsize}\}$, where $\QSsizemin(\QS)$ is the size of smallest quorum in $\QS$, and $\QSsizemin(\QS)\geq\sqrt{\periodsize}$ according to Lemma~\ref{lemma:demand lower bound}.
We then have $\frac{1}{\QSsizemin(\QS)}\leq\frac{1}{\sqrt{\periodsize}}$ and $\frac{\QSsizemin(\QS)}{\periodsize}\geq\frac{1}{\sqrt{\periodsize}}$.
Therefore, $\load(\QS)\geq\frac{\QSsizemin(\QS)}{\periodsize}$, which is different from the result in Theorem in~\cite{bian2009quorum}, because the rotation closure property is considered.

Next we discuss $\QSsizemin(\QS)$.
Notice that  the cardinality of $\quorum_u$ should be not less than $\sqrt{\periodsize}$ when the demand of each node $\demand_u\geq \conone\datarate$.
Otherwise, the rotation closure property cannot be satisfied.
Thereinafter, we analyze the bound of the load in two case: $\selecttime=1$ and $\selecttime\geq 2$.

In Lemma~\ref{lemma:bound on demand},
an obvious upper bound of the QS load appears when a node is required to afford the full demand.
\ie, the demand $\demand_v$ of a node $v\in\communicateset_u$ is not less than the maximal data rate, $\demand_v\geq\datarate$.
Under this case, the cardinality of each  quorum is $\periodsize$.
Because each node randomly selects a quorum with equal probability and the probability that each time slot is included in a quorum is $\frac{1}{|\communicateset|}$, $\load_{\strategy}(i)=\sum\limits_{Q\in\QS:\slot_i\in Q}P_{\strategy}(Q)=\sum\limits_{Q\in\QS:\slot_i\in Q}\frac{1}{|\communicateset|}=1$, $\load_{\strategy}(\QS)=1$.
Under \protocol, the cardinality of each quorum is not bigger than $\periodsize$ and the probability that each time slot is included in a quorum is less than $\frac{1}{|\communicateset|}$. So $\load_{\strategy}(\QS)\leq 1$.
\begin{lemma}\label{lemma:load upper bound}
In \protocol, the load of $\QS_g$ is less than $\frac{\confour}{|\communicateset_u|}$, \ie, $\load_{\strategy}(\QS_g)\leq\frac{\confour}{|\communicateset_u|}$, when the QS $\QS_g$ satisfies the rotation closure properties and $\selecttime=1$, where $\confour=\lceil\frac{1}{\coninterf}\rceil-\lceil\frac{1}{4\coninterf^2}\rceil$.
\end{lemma}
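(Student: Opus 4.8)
The plan is to bound the per--slot access probability $\load_{\strategy}(\slot_i)=\sum_{Q\in\QS_g:\,\slot_i\in Q}P_{\strategy}(Q)$ uniformly over all slots $\slot_i\in\period$, since $\load_{\strategy}(\QS_g)=\max_i\load_{\strategy}(\slot_i)$, and to show each such term is at most $\confour/|\communicateset_u|$.

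First I would pin down the largest quorum that \protocol can ever hand out to a node of $\communicateset_u$. By Lemma~\ref{lemma:bound on demand}, every $v\in\communicateset_u$ satisfies $\demand_v\le\sum_{w\in\communicateset_u}\demand_w\le\datarate/\coninterf$, so by Rule~1 the number of rows (and of columns) of $\quorum_v$ is at most $\lceil\sqrt{\periodsize}/(2\coninterf)\rceil$, and by Lemma~\ref{lemma:demand lower bound} it is at least $1$. Substituting this row/column count into the cardinality expression $|\quorum_v|=2\sqrt{\periodsize}\lceil\demand_v\sqrt{\periodsize}/(2\datarate)\rceil-\lceil\demand_v\sqrt{\periodsize}/(2\datarate)\rceil^2$ obtained in the proof of Lemma~\ref{lemma:demand lower bound}, and dividing by $|\period|=\periodsize$, gives $|\quorum_v|/\periodsize\le\lceil 1/\coninterf\rceil-\lceil 1/(4\coninterf^2)\rceil=\confour$; i.e.\ any selected quorum touches at most a $\confour$ fraction of the period.

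Next I would handle the randomisation of Algorithm~\ref{algorithm:quorum selection} in the case $\selecttime=1$. In that regime a node effectively makes one uniform draw among its $\lceil\sqrt{\periodsize}\rceil+1-\lceil\demand_v\sqrt{\periodsize}/(2\datarate)\rceil$ candidate quorums (the single permitted re--draw, conditioned on avoiding an occupied quorum, only deletes quorums from the support, which cannot increase any $\load_{\strategy}(\slot_i)$). Because the candidate quorums of $v$ are the band quorums prescribed by Rule~1, every slot of $\period$ is covered by the same number of them up to lower--order boundary terms, so the probability that a fixed $\slot_i$ lies in $v$'s selected quorum is at most $|\quorum_v|/\periodsize\le\confour$. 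Finally, exactly as in the paragraph preceding the statement, the $|\communicateset_u|$ nodes of the communication set share the one period and are equiprobable to own a given slot, which contributes the factor $1/|\communicateset_u|$; expanding $\load_{\strategy}(\slot_i)=\sum_{Q:\,\slot_i\in Q}P_{\strategy}(Q)$ over the at most $|\communicateset_u|$ quorums in use and using $P_{\strategy}(Q)\le 1/|\communicateset_u|$ together with the previous bound yields $\load_{\strategy}(\slot_i)\le\confour/|\communicateset_u|$ for every $i$, hence $\load_{\strategy}(\QS_g)\le\confour/|\communicateset_u|$.

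I expect the delicate part to be the second half: making precise that with $\selecttime=1$ the induced access distribution is (dominated by) the uniform distribution over the band translates, and that the grid's boundary slots do not push some individual slot's coverage probability above the average value $|\quorum_v|/\periodsize$. Reconciling the two ceiling functions appearing in $\confour$ with the ceilings coming from Rule~1 and from $\slotsetsize_v=\demand_v\periodsize/\datarate$ is the remaining bookkeeping that has to be done carefully.
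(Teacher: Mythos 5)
Your proposal is correct in spirit and takes essentially the same route as the paper's own proof: it bounds the per-slot load by the product of the uniform selection probability (at most $\frac{1}{|\communicateset_u|}$, since there are at least $|\communicateset_u|$ candidate quorums) and the fractional quorum size $\frac{|\quorum_v|}{\periodsize}\leq\lceil\frac{\demand_v}{\datarate}\rceil-\lceil\frac{\demand_v}{2\datarate}\rceil^2\leq\confour$ obtained by plugging the demand cap $\demand_v\leq\frac{\datarate}{\coninterf}$ from Lemma~\ref{lemma:bound on demand} into the Rule~1 cardinality formula, and then maximizes over slots --- which is exactly the computation in the paper's displayed bound for $\load_{\strategy}(\slot_i)$. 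The only differences are cosmetic: you cap each node's demand individually and add the band-symmetry and $\selecttime=1$ re-draw remarks, whereas the paper absorbs the sum over quorums containing the slot directly, with the same level of ceiling bookkeeping.
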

\begin{proof}
Since the cardinality of a quorum $\quorum_v$ ($v\in\communicateset_u$) is $|\quorum_v|$ under \protocol, the probability that a time slot is included in $\quorum_v$ is $\frac{|\quorum_v|}{\periodsize}$.
Suppose there are $\gamma$ quorums in each $\QS$.
$|\communicateset_u|\leq \gamma\leq\sqrt{\periodsize}$ when the summation of all nodes' demand in the same $\communicateset_u$ is implementable.
So the probability that a quorum is chosen by a node is $\frac{1}{\sqrt{\periodsize}}\leq\frac{1}{\gamma}\leq\frac{1}{|\communicateset_u|}$.
Therefore, the probability that each time slot is included in the quorum $\quorum_v$ when there are $|\communicateset_u|$ quorums is $\frac{1}{|\communicateset_u|}\times\frac{|\quorum_u|}{\periodsize}$.
The load induced by the strategy $\strategy$ on a time slot $\slot_i$ is $\load_{\strategy}(\slot_i)=\sum\limits_{\quorum_u\in\QS_g:\slot_i\in \quorum_u}\frac{1}{|\communicateset_u|}\times\frac{|\quorum_u|}{\periodsize}$.
Notice there are $|\communicateset_u|$ nodes to select quorums from $\QS_g$, \ie $\QS_g=\{\quorum_v:\quorum_v\in\communicateset_u\}$. So
\begin{align}\label{equ:load}
&\load_{\strategy}(\slot_i)=\frac{1}{\gamma\periodsize}\sum\limits_{\quorum_u\in\QS_g:\slot_i\in \quorum_u}|\quorum_u|\nonumber\\
\leq&\frac{1}{|\communicateset_u|\periodsize}\sum\limits_{\slot_i\in\quorum_u\in\QS_g}\{2\sqrt{\periodsize}\lceil\frac{\demand_u\sqrt{\periodsize}}{2\datarate}\rceil-\lceil\frac{\demand_u\sqrt{\periodsize}}{2\datarate}\rceil^2\}\nonumber\\
=&\frac{1}{|\communicateset_u|}\sum\limits_{\slot_i\in\quorum_u\in\QS_g}\{\lceil\frac{\demand_u}{\datarate}\rceil-\lceil\frac{\demand_u}{2\datarate}\rceil^2\}
\leq\frac{1}{|\communicateset_u|}(\lceil\frac{1}{\coninterf}\rceil-\lceil\frac{1}{4\coninterf^2}\rceil)
\end{align}

Thus the load induced by the strategy $\strategy$ on the quorum system $\QS_g$ is $\load_{\strategy}(\QS_g)=\max\limits_{\slot_i\in \period}\load_{\strategy}(\slot_i)\leq\frac{1}{|\communicateset_u|}(\lceil\frac{1}{\coninterf}\rceil-\lceil\frac{1}{4\coninterf^2}\rceil)
=\frac{\confour}{|\communicateset_u|}$, where $\confour=\lceil\frac{1}{\coninterf}\rceil-\lceil\frac{1}{4\coninterf^2}\rceil$.
\end{proof}

Now we analyze the upper bound of the QS load when $\selecttime\geq 2$.
Here we treat a quorum containing a row and a column and call the quorum as a bin, so each node actually selects several such bins according to the \emph{Rule 1}.
Each node has $\selecttime$ times to select its quorum in Algorithm~\ref{algorithm:quorum selection}.
Thus the quorum selection problem in Algorithm~\ref{algorithm:quorum selection} is equivalent to the \emph{k}-round ball placement problem~\cite{li2009multiple}.
When $m_i=1$, we can obtain that the maximum load achieved by Algorithm~\ref{algorithm:quorum selection} is less than $\frac{\log\log\sqrt{\periodsize}}{\log\selecttime}$ \whp according to Theorem 6 of~\cite{mitzenmacher2001power}.
When $m_i\geq 2$ in each round, \ie several bins are selected together in each round, it is equivalent to combining several bins into one.
Therefore, the total number of bins is correspondingly reduced.
We can obtain Lemma \ref{lemma:load multi k single ball}.
\begin{lemma}
\label{lemma:load multi k single ball}
The maximum load achieved by Algorithm~\ref{algorithm:quorum selection} is less than $\frac{\log\log\sqrt{\periodsize}}{\log\selecttime}$ \whp.
\end{lemma}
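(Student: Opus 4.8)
The plan is to reduce the quorum-selection procedure of Algorithm~\ref{algorithm:quorum selection} to the classical $\selecttime$-round balls-into-bins process analyzed in~\cite{mitzenmacher2001power} and~\cite{li2009multiple}, and then import their maximum-load bound. First I would formalize the bin abstraction already suggested in the paragraph preceding the lemma: call an \emph{elementary quorum} the union of one row and one column of $\QS_g$ and regard each such elementary quorum as a bin, so that $\QS_g$ furnishes about $\lceil\sqrt{\periodsize}\rceil$ bins, while each node $v\in\communicateset_u$ is a ball. The test ``$\quorum_i$ was occupied'' in Line~\ref{quorum selection if}, namely $|\quorum_i\cap\quorum_j|>\lceil\frac{\demand_i\demand_j\periodsize}{2\datarate^2}\rceil$, is exactly the event that the bin a ball wants already carries enough previously placed balls to create contention; a node that finds its random choice occupied re-draws, up to $\selecttime$ times. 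This is precisely a $\selecttime$-round ball placement in the sense of~\cite{li2009multiple}: in round $k\le\selecttime$ a ball samples a fresh bin uniformly at random and settles there if doing so does not increase the contention, otherwise it proceeds to the next round and stops after $\selecttime$ rounds.

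Second, I would dispatch the base case in which every node needs a single elementary quorum, i.e. $\lceil\frac{\demand_v\sqrt{\periodsize}}{2\datarate}\rceil=1$ (what the paragraph calls $m_i=1$). Here the process is literally the ``power of $\selecttime$ choices'' with $N=\Theta(\sqrt{\periodsize})$ bins and at most $N$ balls, since by the demand condition (Lemma~\ref{lemma:bound on demand} together with Lemma~\ref{lemma:demand lower bound}) one has $|\communicateset_u|\le\sqrt{\periodsize}$. Theorem~6 of~\cite{mitzenmacher2001power} then yields that the maximum bin load is $\frac{\log\log N}{\log\selecttime}+\Theta(1)$ \whp, which is $\frac{\log\log\sqrt{\periodsize}}{\log\selecttime}(1+o(1))$; because each bin corresponds to a row--column pair and distinct balls sharing a bin are exactly what generate the competition counted by $\load_{\strategy}(\QS_g)$, this bounds the load.

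Third, for the general case $m_i\ge 2$, where a node draws several adjacent elementary quorums together (its quorum spans $\lceil\frac{\demand_v\sqrt{\periodsize}}{2\datarate}\rceil$ consecutive rows and columns), I would argue by a coalescing/coupling reduction: selecting a block of $m_i$ adjacent row--column pairs at once is like running the same $\selecttime$-round process on a coarser bin space obtained by merging elementary quorums into super-bins, of which there are fewer than $\sqrt{\periodsize}$. A process on fewer uniform bins with no more balls is stochastically dominated, up to constants, by the one with the larger number of bins, so the bound $\frac{\log\log\sqrt{\periodsize}}{\log\selecttime}$ persists. Combining the two cases and taking the worst node over $\communicateset_u$ gives the claim.

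I expect the main obstacle to be this last reduction, because nodes have \emph{heterogeneous} demands: a high-demand node occupies a block of several adjacent row/column bins, two such blocks may overlap only partially, and then the ``a bin is occupied'' predicate is not the clean single-bin indicator of the textbook model. Making the coupling between the heterogeneous block-placement process and a homogeneous $\selecttime$-round balls-into-bins process rigorous — in particular showing that partial block overlaps cannot inflate the per-slot contention beyond what the merged-bin model predicts, and that the ``$k\le\selecttime$'' cutoff in Algorithm~\ref{algorithm:quorum selection} is faithfully mirrored by the round structure — is the delicate part; once the reduction is in place, the appeal to~\cite{mitzenmacher2001power} is routine.
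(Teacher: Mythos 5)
Your proposal follows essentially the same route as the paper: the paper's own argument is exactly the paragraph preceding the lemma, which treats each row--column pair as a bin, views each node's $\selecttime$ selection attempts as a $\selecttime$-round ball placement in the sense of~\cite{li2009multiple}, invokes Theorem~6 of~\cite{mitzenmacher2001power} for the case $m_i=1$, and handles $m_i\geq 2$ by observing that selecting several bins at once amounts to merging bins and reducing their number. In fact you are somewhat more careful than the paper, which asserts the merged-bin reduction and the faithfulness of the ``occupied'' test without the coupling argument you correctly flag as the delicate step.
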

\section{Synchronous Demand Implementation}
\label{sec:synchronous demand implementation}
This section evaluates the performance of \protocol when the data aggregation is implemented and clocks are
    synchronous, which we call as synchronous demand implementation.
In order to implement the demands, we construct a tree and design the specific demand implementation methods.
The performance of \protocol under the asynchronization will be analyzed in the next
   section by applying the results of this section.
\subsection{Tree Construction}\label{subsection:constructing a tree}
Firstly, a tree~$\tree$ is constructed based on $G$ by constructing a connected dominator set (CDS).
we then define a new conception \emph{region} in order to obtain a conflict-free quorum assignment for each node.

We construct a CDS by the breadth-first-search (BFS) based on $G$.
Each dominatee connects with the dominator closest to it.
In this way, $\tree$, rooting at $\sink$, can be constructed and is ranked into $\levelmax$ levels from $\sink$.
The level of $\sink$ is labeled $\level_0$.
The parent and the children of a node $u$ are denoted by $p(u)$ and $c(u)$ respectively.

We assign QSs for each region in two phases.
At the first phase, a new conception \emph{region} is defined to obtain the conflict-free partition by the vertex coloring.
At the second phase, we assign each region with a period so nodes can be active without confliction.\\
\textbf{Phase I:}
After $\tree$ is constructed, each node can know its own level and its one-hop neighbors' IDs and levels.
The one-hop neighborhood of each node $u$ is denoted by $N_1(u)$ and notice that $u\in N_1(u)$.
We call a one-hop neighborhood of a non-dominatee node as a \emph{region} (denoted by $\region$) in the tree~$\tree$.
Notice that any dominatee does not form a region.
Because of interference within a network, the QSs of some neighboring regions cannot be assigned a same time slots set.
Here, we say two regions $\region_1$ and $\region_2$ are \emph{neighboring} (or \emph{over-lap}) if there are two nodes $u\in\region_1$ and $v\in\region_2$ and $u$ (or $v$) locates in the interference range of $v$ (or $u$).
\begin{figure}[h]
\centering
\subfigure[]{\includegraphics[scale=.7, bb=248 394 348 444]{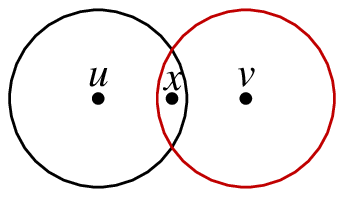}\label{subfig:overlap region}}
\hspace{1cm}\subfigure[]{\includegraphics[scale=.7, bb=235 387 357 459]{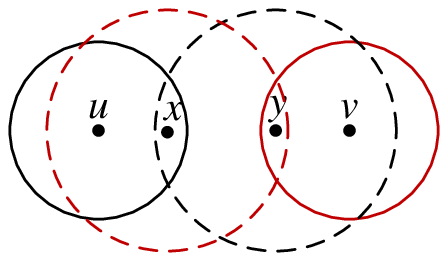}\label{subfig:neighboring region}}
\centering\caption{\footnotesize\label{fig:region} \subref{subfig:overlap region} $\region_u$ and $\region_v$ are two over-lap regions. \subref{subfig:neighboring region} $\region_u$ and $\region_v$ are two neighboring regions.}
\end{figure}

If two regions are conflict-free, \ie they are not neighboring and over-lap, we color them with a same color.
So it is vertex coloring problem to find the minimal number of colors.
The least number of colors, denoted by $\colorc$, necessary to color all regions, is affected by the interference model
    $\interferenceM$, \ie we need at least $\colorc(\interferenceM)$ to color all regions and the regions with same color are conflict-free.
We label each region $\region_u$ with a color index $\colorindex_{\region_u}$, $\colorindex_{\region_u}\in\colorset=\{1,\cdots,\colorc(\interferenceM)\}$.
In order to determine the least number of colors, we define a parameter (denoted by $\interfpara$) to denote the interference range under different interference models.
Thus any pair of regions can have same color if they are more than $\interfhop$ hops apart.
\\\textbf{Phase II:} Allocate each region with a QS.
Because the quorum is designed in the subsection~\ref{subsection:designin of qsts}, each node belongs to at least one region and some belong to several regions.
For example, $x$ belongs to two regions $\region_u$ and $\region_v$ in Figure~\ref{subfig:overlap region}.
Thus the number of quorums each node occupies is same with the number of regions it belongs to.
We assign each region with a period,\ie a QS, according to the color so each neighboring or over-lap regions can be conflict-free.
For example, $\region_u$ and $\region_v$ are respectively assigned two periods $\period_u$ and $\period_v$.
Let $\period_u=\{1,2,3\}$ and $\period_v=\{4,5,6\}$ so $\period_u\cap\period_v=\varnothing$.

We use a natural number $i (i\in\mathbb{Z})$ to label the ID of a region in order to assign the time slot set conveniently.
Firstly, we color all the regions by Algorithm~\ref{algorithm:region coloring}.
Secondly, each region $i$ is assigned a slot set $\slotset_i$ by Algorithm~\ref{algorithm:slot set assignment}.
\begin{algorithm}
\caption{Region Coloring}
\label{algorithm:region coloring}
\textbf{Input}: The labels of all regions, $\region_i$, $i=1,\cdots,|CDS|$ and the color set $\colorset$.\\
\textbf{Output}: The colored regions.

\begin{algorithmic}[1]
\FOR{$i=1,\cdots,|CDS|$}
\IF{There is no region colored within $\interfhop$ hops centered at a region $\region_i$}
\STATE $\region_i$ labels itself with a color $\colorindex_{\region_i}$ ($\in\colorset$).
\ELSE
\STATE $\region_i$ selects a color $\colorindex_{\region_i}$ from $\colorset$ and $\colorindex_{\region_i}$ is different from the colors of other nodes within $\interfhop$ hops centered at $\region_i$'s.
\ENDIF
\ENDFOR
\end{algorithmic}
\end{algorithm}
\begin{algorithm}
\caption{Slot Set Assignment}
\label{algorithm:slot set assignment}
\textbf{Input}: All the colored regions, $\colorindex_{\region_i}$, $i=1,\cdots,|CDS|$ and $\colorindex_{\region_i}\in CDS$.\\
\textbf{Output}: Each region $\region_i$ obtains a slot set $\slotset_{\region_i}$.

\begin{algorithmic}[1]
\FOR{$j=\levelmax,\cdots,0$ and $\region_i\subset\level_j$}
\FOR{$i=1,\cdots,|CDS|$}
\STATE\label{line algorithm 1 in slot set} A region $\region^j_i$ colored with a color index $\colorindex_{\region^j_i}$ is assigned the slot set $\slotset_{\region^j_i}=\colorindex_{\region^j_i}\ mod\ \interfpara\times\slotsetsize$.
\IF{$\slotset_{\region^j_i}>\max\limits_{all\ c(i)}\slotset_{\region^{(j+1)}_{c(i)}}$}
\STATE $\rotate(\slotset_{\region^j_i}, m)$. 
\ENDIF
\ENDFOR
\ENDFOR
\end{algorithmic}
\end{algorithm}

After each region is assigned a period, it can obtain its quorum according to Algorithm~\ref{algorithm:quorum selection}.
We can design a determinate quorum selection method rather than the random one in Algorithm~\ref{algorithm:quorum selection}.
The determinate quorum selection method is given in Algorithm~\ref{algorithm:determinate quorum selection}.
The difference between Algorithm~\ref{algorithm:quorum selection} and Algorithm~\ref{algorithm:determinate quorum selection} is that each node doesn't select quorums randomly.
By Algorithm~\ref{algorithm:determinate quorum selection}, we can obtain some properties.

Here we give a notion: \emph{logical connection}.
Nodes $u$ and $v$ are logically connected if they locate in each other's transmission range.
If $u$ and $v$ cannot be active in common time slots, they cannot communicate with each other.
A graph can be logically connected by topology control algorithms.
Thus $\tree$ is logically connected if $G$ is.
According to Algorithm~\ref{algorithm:determinate quorum selection}, it is easy to know that each pair of nodes are physically connected in $\tree$ if $G$ is logically connected.
\begin{algorithm}
\caption{Determinate Quorum Selection}
\label{algorithm:determinate quorum selection}
\begin{algorithmic}[1]
\STATE Each node $u$ collects the level label and quorums of its neighbors in $N_1(u)$;
\STATE $u$ classifies its neighbors into three sets: $S_1$, $S_2$ and $S_3$.  $S_1$ contains the node in $\level_{k-1}$ and $S_2$ contains the nodes in $\level_k$ and $S_3$ contains the nodes in $\level_{k+1}$;
\IF{$u$ is the sink}
\STATE $S_1=\emptyset$;
\ENDIF
\FOR{$i=1,\cdots,w$}
\IF{$\quorum_i$ is not chose}
\STATE According to the order $S_3$, $S_2$ and $S_1$, each node in $S_3$, $S_2$ and $S_1$ chooses $\quorum_i$.
\ENDIF
\ENDFOR
\end{algorithmic}
\end{algorithm}

\begin{lemma}\label{lemma:OHIQA conflict free}
Each pair of neighboring or over-lap regions, and each pair of links in a same region are conflict-free according to Algorithm~\ref{algorithm:determinate quorum selection}.
\end{lemma}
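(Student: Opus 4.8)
The plan is to prove the two halves of the statement separately, each resting on one of the two algorithms that precede it. For the first half I would fix two neighboring or over-lap regions $\region_1,\region_2$: by definition some node of one lies in the interference range of a node of the other, so $\region_1$ and $\region_2$ are at most $\interfhop$ hops apart, whence Algorithm~\ref{algorithm:region coloring} gives them different color indices. Since color indices are reused only at hop-distance strictly greater than $\interfhop$, they also stay different modulo $\interfpara$, so by line~\ref{line algorithm 1 in slot set} of Algorithm~\ref{algorithm:slot set assignment} the two regions are assigned disjoint slot sets $\slotset_{\region_1}$ and $\slotset_{\region_2}$ (disjoint precisely because $\colorindex_{\region_1}\not\equiv\colorindex_{\region_2}\pmod{\interfpara}$), and the rotation invoked in the \textbf{if}-branch of Algorithm~\ref{algorithm:slot set assignment} merely translates a slot set by whole periods, preserving this disjointness. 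Since no node of $\region_1$ is ever awake in a slot used by $\region_2$, the two regions are conflict-free.

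For the second half I would fix a region $\region_u=N_1(u)$ with $u$ a non-dominatee, so every link of $\region_u$ is incident to $u$ and two such links conflict only if they are active in a common slot at $u$. In Algorithm~\ref{algorithm:determinate quorum selection} the members of $N_1(u)$ sweep the quorums $\quorum_1,\dots,\quorum_w$ and each still-free $\quorum_i$ is claimed by the highest-priority node in the order $S_3\succ S_2\succ S_1$; hence the incident neighbors of $u$ receive pairwise distinct quorums, and in particular every child of $u$ (in $S_3$) receives a quorum of strictly smaller index than $u$'s parent (in $S_1$). Because the quorums of Rule~1 are indexed by the first of their rows and columns in the row-major layout of $\period$ of Figure~\ref{fig:different_nodes_quorum}, a smaller index corresponds to an earlier band of active slots, so the rendezvous slots $\quorum_u\cap\quorum_c$ of the up-links $c\to u$ all precede the rendezvous slots $\quorum_u\cap\quorum_{p(u)}$ of the link $u\to p(u)$, while two distinct children are separated by the distinctness of their quorum indices. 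No two links of $\region_u$ are therefore forced to be active in the same slot, which is the intra-region conflict-freeness; and because the periods of neighboring regions were already made disjoint in the first half, these per-region schedules never interfere across regions either.

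The step I expect to be the real obstacle is this last one: turning ``pairwise distinct quorum indices'' into genuinely non-overlapping active-slot windows at $u$. It needs the row-major layout of Figure~\ref{fig:different_nodes_quorum}, Rule~1 which ties each quorum to a contiguous band of rows and columns, and the demand bounds of Lemmas~\ref{lemma:demand lower bound} and~\ref{lemma:bound on demand} --- in particular the upper bound, which caps $|\communicateset_u|$ so that $|\communicateset_u|$ distinct quorums really do fit inside one period --- and one must still check that the rotation step of Algorithm~\ref{algorithm:slot set assignment}, triggered when a region's slot set would overshoot its children's, does not reopen an overlap between an up-link and a down-link. The first half, by contrast, is essentially bookkeeping on the coloring and the modular slot assignment.
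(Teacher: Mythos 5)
Your first half coincides with the paper's own argument: neighboring or over-lap regions get different colors by Algorithm~\ref{algorithm:region coloring}, and differently colored regions get disjoint slot sets by line~\ref{line algorithm 1 in slot set} of Algorithm~\ref{algorithm:slot set assignment}, which settles the inter-region part.

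The intra-region part is where there is a genuine gap, and your proposal picks the wrong mechanism. The paper's step is: Algorithm~\ref{algorithm:determinate quorum selection} gives the nodes of one region pairwise distinct quorums, and then, by the structure of the (grid) QS, any time slot is occupied by at most two nodes of the region --- the two endpoints of a single link --- so two different links of the same region never share a slot. You instead try to derive intra-region conflict-freeness from temporal separation: smaller quorum index $\Rightarrow$ earlier ``band'' of slots $\Rightarrow$ ``genuinely non-overlapping active-slot windows at $u$.'' That target is unattainable and also not what is needed: distinct quorums of the same QS are \emph{required} to intersect (the intersection/rotation closure property of Lemma~\ref{lemma:rotation satisfication} is exactly what gives $u$ a rendezvous with each neighbor), so their active windows can never be disjoint; and ordering the rendezvous slots in time does not by itself exclude that two different links are driven onto the same slot. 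The property that actually closes the argument is a bounded multiplicity statement: each slot of the period lies in at most two of the quorums selected inside the region, because $\quorum_i\cap\quorum_j$ consists of the crossing cells of blocks $i$ and $j$ in the grid of Figure~\ref{fig:different_nodes_quorum}, and a third quorum $\quorum_k$ meets different cells --- this is the content of the paper's appeal to ``the definition of QS,'' and it never appears in your proposal. The child-before-parent ordering you establish is correct but belongs to Lemma~\ref{lemma:slot first or later}, not to conflict-freeness, and the auxiliary ingredients you flag (the demand bounds of Lemma~\ref{lemma:bound on demand} and~\ref{lemma:demand lower bound}, the rotation in Algorithm~\ref{algorithm:slot set assignment}) do not substitute for the missing multiplicity argument.
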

\begin{proof}
According to Algorithm~\ref{algorithm:region coloring}, any neighboring (or overlap) regions are colored different colors.
In Algorithm~\ref{algorithm:slot set assignment}, the regions $\region_i$ and $\region_j$ colored different colors are assigned different slot set $\slotset_{\region_i}$ and $\slotset_{\region_j}$.
$\slotset_{\region_i}\cap\slotset_{\region_j}=\varnothing$ according the line~\ref{line algorithm 1 in slot set} in Algorithm~\ref{algorithm:slot set assignment}.
In Algorithm~\ref{algorithm:determinate quorum selection}, each node is assigned a quorum different from that of others in the same region.
According to the definition of QS, there are only two nodes to be active simultaneously.
\end{proof}
\begin{lemma}\label{lemma:slot first or later}
 If a node $u$ (expect $\sink$) and its parent $p(u)$ are respectively assigned the time slots $\slot_u$ and $\slot_{p(u)}$, then $\slot_u<\slot_{p(u)}$ according to Algorithm~\ref{algorithm:determinate quorum selection}.
\end{lemma}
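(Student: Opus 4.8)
The plan is to trace Algorithm~\ref{algorithm:determinate quorum selection} (together with Algorithm~\ref{algorithm:slot set assignment} when the two endpoints sit in regions carrying different periods), exploiting that both process nodes and regions in order of decreasing tree level --- deepest first. I would start by observing that $u$ and $p(u)$ always share a region: by the CDS construction a dominatee's parent is the nearest dominator, so if $u$ is a dominatee then $p(u)$ is a non-dominatee and $\region_{p(u)}=N_1(p(u))$ contains $u$, while if $u$ is itself a dominator then $u$ is a region center and $\region_u=N_1(u)$ contains its one-hop neighbor $p(u)$. Fix such a region $\region$ and let $k$ be the level of $u$, so $p(u)$ lies at level $k-1$. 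Inside $\region$, Algorithm~\ref{algorithm:determinate quorum selection} scans the indexed quorums $\quorum_1,\quorum_2,\ldots$ and hands each still-free $\quorum_i$ to the highest-priority member not yet served, with priority running $S_3$ (level $k+1$), then $S_2$ (level $k$, where $u$ sits), then $S_1$ (level $k-1$, where $p(u)$ sits); since a $\quorum_i$ is skipped once chosen, $u$ is served strictly before $p(u)$, so $u$ receives a quorum $\quorum_a$ and $p(u)$ a quorum $\quorum_b$ with $a<b$.

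Next I would convert $a<b$ into $\slot_u<\slot_{p(u)}$ using the layout of \emph{Rule~1} and Figure~\ref{fig:different_nodes_quorum}: the slots of $\period$ are poured into $\QS_g$ in row-major order and $\quorum_i$ is built from the $i$-th row(s) and column(s), so a smaller index places the relevant row block --- hence the rendezvous slot $u$ uses toward $p(u)$, equivalently the least slot of $\quorum_u$ --- earlier in the period. If that rendezvous slot must instead be read off two regions carrying distinct periods, I would invoke Algorithm~\ref{algorithm:slot set assignment}, which assigns slot sets level by level from $\levelmax$ down to $0$ and whose rotation step forces a level-$j$ region's slot set to come after the slot sets of its children's level-$(j+1)$ regions; since $u$'s level $k$ exceeds $p(u)$'s level $k-1$, $u$'s slot set is scheduled no later, and again $\slot_u<\slot_{p(u)}$.

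The step I expect to fight with is making Algorithm~\ref{algorithm:determinate quorum selection} precise enough to license the comparison: one has to pin down where the center node $u$ itself falls in the $S_3,S_2,S_1$ ordering and exactly what the symbol $\slot_u$ denotes (the slot used for the $u\to p(u)$ transmission, or the least slot of $\quorum_u$). Once that convention is fixed, the priority comparison $a<b$ and the monotonicity of the row-major fill are routine, and the claim follows.
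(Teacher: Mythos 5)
Your proposal is correct and takes essentially the same route as the paper: the within-region ordering comes from the $S_3$, $S_2$, $S_1$ priority in Algorithm~\ref{algorithm:determinate quorum selection} (deeper level served first, hence $u$ before $p(u)$), and the cross-region/period ordering comes from the level-descending slot-set assignment of Algorithm~\ref{algorithm:slot set assignment}. The delicate point you flag---what exactly $\slot_u$ denotes and how the earlier quorum index translates into an earlier active slot---is likewise left implicit in the paper, whose proof simply asserts both orderings.
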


In Algorithm~\ref{algorithm:slot set assignment}, the active period of each node is earlier than that of its parent.
According to Algorithm~\ref{algorithm:determinate quorum selection}, the node in the level $\level_i$ is active earlier than that in $\level_{i-1}$ in the same region.
Thus we obtain Lemma~\ref{lemma:slot first or later}.
It is easy to obtain that each parent will transmit after it receives all packets from all of its children in the same period.
\subsection{Data Aggregation}\label{subsection:data aggregation demand}
This section discusses performance of \protocol under data aggregation.
Under data aggregation, the demand of a parent is the summation of its children, \ie $\demand_{p(u)}=\sum\limits_{u\in \region_{p(u)}}\demand_u$.
We can obtain that the maximal size of a QS among all QSs is determined by the maximal degree of $\tree$  so we have Lemma~\ref{lemma:maxsize for a QS}.
\begin{lemma}\label{lemma:maxsize for a QS}
$\max|\QS|=\max\limits_{u\in\tree}\degree^2_u$.
\end{lemma}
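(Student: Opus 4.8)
The plan is to relate the size of any QS assigned to a region directly to the demand of the region's central node, and then invoke the data-aggregation demand relation to express that demand in terms of tree degrees. First I would recall that each region $\region_u$ is a one-hop neighborhood of a non-dominatee node $u$, and that the period $\period_u$ (equivalently the QS $\QS_u$) assigned to $\region_u$ must contain enough time slots so that every node $v\in\region_u$ can realize its quorum $\quorum_v$ of cardinality $\slotsetsize_v=\frac{\demand_v|\period|}{\datarate}$. Since under data aggregation the demand of a node equals the sum of the demands of its children, the binding constraint on $|\QS_u|$ is the demand of the parent node in $\region_u$, namely $\demand_{p(u)}=\sum_{w\in\region_{p(u)}}\demand_w$. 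So the first step is to argue $|\QS_u|$ is governed by $\demand_{p(u)}$ through the grid construction of Section~\ref{subsection:designin of qsts}.

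Next I would carry out the degree bookkeeping. In the tree $\tree$, a node $u$ of degree $\degree_u$ has at most $\degree_u-1$ children (one edge goes to its parent), and in a region centered at $u$ the relevant demand is the aggregate flowing through $u$. Combining the grid-QS sizing (a quorum for demand $\demand$ occupies $\lceil\frac{\demand\sqrt{\periodsize}}{2\datarate}\rceil$ rows and columns, so roughly $\demand/\datarate$ fraction of the $\periodsize$ slots per region) with the aggregation identity, the number of slots a QS must host scales with the number of descendants feeding into that region, and after accounting for the per-color reuse this reduces to a product of two degree factors — one for the branching at the region's center and one for the sibling contention inside the region — giving the $\degree_u^2$ form. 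Taking the maximum over $u\in\tree$ then yields $\max|\QS|=\max_{u\in\tree}\degree_u^2$.

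The main obstacle I anticipate is making the reduction ``$|\QS_u|$ is determined by $\degree_u^2$'' precise rather than merely an upper bound: the claim is an equality, so I must exhibit a region in which the QS actually attains size $\max_u\degree_u^2$, which means showing the worst-case region is the one hanging off the maximum-degree node and that no slack is lost in the ceilings of Rule~1 or in the color-based slot-set partition of Algorithm~\ref{algorithm:slot set assignment}. I would handle this by choosing the maximum-degree node, noting all its incident links must be scheduled within one period of its region's QS, and checking that the grid dimension $\lceil\sqrt{\periodsize}\rceil$ is chosen exactly large enough to accommodate $\degree_u$ distinct quorums (one per neighbor) each of the minimal rotation-closure size — so the two $\degree_u$ factors (number of quorums $\times$ rows-or-columns per quorum) multiply to $\degree_u^2$. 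The demand bounds of Lemma~\ref{lemma:bound on demand} and Lemma~\ref{lemma:demand lower bound} are what keep this quantity finite and well-defined.
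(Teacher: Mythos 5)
The paper itself gives no real proof of this lemma --- it is asserted in one sentence right after the aggregation identity $\demand_{p(u)}=\sum_{u\in\region_{p(u)}}\demand_u$, and the intended argument is a simple count: the QS assigned to a region $\region_u$ is a grid of side $\lceil\sqrt{\periodsize}\rceil$, Rule~1 supplies at most $\lceil\sqrt{\periodsize}\rceil$ distinct (minimal, row-plus-column) quorums, and Algorithm~\ref{algorithm:determinate quorum selection} must hand a distinct quorum to each of the $\degree_u$ nodes attached to the region's center; hence $\sqrt{\periodsize}\geq\degree_u$, i.e.\ $|\QS|\geq\degree_u^2$ slots, and the minimal-active-time design condition forces equality, so maximizing over $u\in\tree$ gives $\degreemax^2$. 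Measured against that, the route you propose in your first paragraph is the wrong mechanism: you make the parent's aggregated demand $\demand_{p(u)}$ the binding quantity and let the slot count ``scale with the number of descendants feeding into the region,'' but $|\QS|$ in this lemma is the period size of one region and is governed by how many \emph{distinct quorums} the grid must supply (one per node/link of the region), not by the total demand routed through the parent --- the demand bounds of Lemmas~\ref{lemma:bound on demand} and~\ref{lemma:demand lower bound} only ensure each per-node quorum can be taken of minimal size, they do not produce the $\degree_u^2$ count. Likewise ``per-color reuse'' is irrelevant here: coloring (Algorithms~\ref{algorithm:region coloring},~\ref{algorithm:slot set assignment}) separates \emph{different} regions in time and enters the delay bound of Theorem~\ref{theorem: the delay boundary} through the factor $\interfpara$, not through $|\QS|$.

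Your closing sentences do contain the correct kernel --- the grid dimension must be just large enough to host $\degree_u$ distinct quorums of minimal rotation-closure size --- but even there the bookkeeping is off: $\degree_u^2$ is the grid area $\sqrt{\periodsize}\times\sqrt{\periodsize}$ with $\sqrt{\periodsize}=\degree_u$, not the product ``number of quorums $\times$ rows-or-columns per quorum'' (a minimal quorum occupies $2\sqrt{\periodsize}-1$ slots, so that product does not equal the slot count). Your instinct that the equality needs a tightness argument (a region attaining the bound, no slack in the ceilings) is sound and is in fact more care than the paper takes, but as written the proposal splits the two $\degree_u$ factors into ``branching at the center'' and ``sibling contention,'' which is not what the construction does; the clean statement is simply that the region of the maximum-degree node needs a $\degreemax\times\degreemax$ grid and no region needs more.
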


The delay of data aggregation is given in Theorem~\ref{theorem: the delay boundary}.
\begin{theorem}\label{theorem: the delay boundary}
The maximal delay of data aggregation is $O(\interfpara\periodsize\graphr+\degreemax^2)$ by Algorithm~\ref{algorithm:region coloring}
and \ref{algorithm:determinate quorum selection}.
\end{theorem}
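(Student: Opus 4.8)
The plan is to decompose the aggregation delay into three pieces: (i) the number of tree levels the data must climb, (ii) the worst-case number of slots needed to move aggregated data up one level, and (iii) an additive ``bottleneck'' term contributed by the densest region. First I would show $\levelmax = O(\graphr)$: since $\tree$ is obtained by a BFS from $\sink$ and each dominatee attaches to its closest dominator, the level of a node equals, up to a constant factor, its hop distance to $\sink$, which is at most $\graphr$. Hence data originating anywhere must traverse $O(\graphr)$ levels before reaching $\sink$.

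Next I would bound the per-level delay. By Lemma~\ref{lemma:slot first or later} every node is active strictly before its parent within a period, and by Lemma~\ref{lemma:synchronous QSTS conncectivity} a parent shares at least one rendezvous slot with each of its children in the same region; so in one period of its region a node can receive from all its in-region children and then forward the aggregate. Algorithm~\ref{algorithm:region coloring} colours regions so that any two regions within $\interfhop$ hops --- in particular a region together with all regions interfering with it --- get distinct colours from a palette whose size is governed by $\interfpara$, and Algorithm~\ref{algorithm:slot set assignment} maps colour $\colorindex$ to the slot set $\colorindex\bmod\interfpara\times\slotsetsize$, which tiles a super-period of $\interfpara\slotsetsize \le \interfpara\periodsize$ slots; the extra test in Algorithm~\ref{algorithm:slot set assignment} rotates a parent region's slot set by $\periodsize$ whenever it would otherwise precede its children, costing at most one further period per level. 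Combining these, one level of aggregation completes within $O(\interfpara\periodsize)$ slots, and by the conflict-freeness of Algorithm~\ref{algorithm:determinate quorum selection} (Lemma~\ref{lemma:OHIQA conflict free}) disjoint subtrees make this progress in parallel.

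Telescoping $O(\interfpara\periodsize)$ slots per level over $\levelmax = O(\graphr)$ levels yields the $O(\interfpara\periodsize\graphr)$ term: after that much time the aggregate of every subtree has reached a child of $\sink$. Finally, at the densest region --- the one realizing $\max|\QS| = \max_{u\in\tree}\degree_u^2$ by Lemma~\ref{lemma:maxsize for a QS} --- the centre node still has to rendezvous with each of its up-to-$\degreemax$ children, which by the grid-quorum structure costs at most one full period of that region, i.e. an extra $O(\degreemax^2)$ slots; this is additive rather than multiplicative because it is paid only once, at the bottleneck region (in particular at $\sink$), not at every level. Summing the two contributions gives the claimed $O(\interfpara\periodsize\graphr + \degreemax^2)$.

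The main obstacle is the bookkeeping around the rotation step of Algorithm~\ref{algorithm:slot set assignment} together with the pipelining argument: one must check that, as data climbs, a parent region's slot set is never pushed more than a bounded number of periods ahead of its children's, so that the per-level cost genuinely stays $O(\interfpara\periodsize)$ and the degree-dependent cost is incurred only at the single densest region --- otherwise the analysis would only give the weaker $O(\interfpara\periodsize\graphr + \degreemax^2\graphr)$. A secondary subtlety is verifying that $\levelmax$, and not some larger quantity induced by the CDS construction, is $O(\graphr)$ for the tree produced by the BFS step.
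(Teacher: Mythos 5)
Your proposal is correct and follows essentially the same route as the paper: bound the number of levels by $\levelmax\leq\graphr$, charge at most $\interfpara\periodsize$ slots per level via the region coloring and slot-set assignment, and add the $\degreemax^2$ term from Lemma~\ref{lemma:maxsize for a QS} for the largest region. The pipelining subtlety you flag (why the $\degreemax^2$ cost is additive rather than multiplied by $\graphr$) is in fact glossed over by the paper's own proof as well, so your treatment is, if anything, slightly more explicit than the original.
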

\begin{proof}
Each region $\region_i$ is assigned a QS $\QS_i$ so it costs $|\QS_i|$ to finish all the transmission in $\region_i$.
Since $|\QS_i|\leq\max|\QS|$ according to Lemma~\ref{lemma:maxsize for a QS}, $\max|\QS|=\max\limits_{u\in\tree} \degree^2_u=\degreemax^2$.
There exists an assignment method such that the number of different color is at most $\interfpara$ in each level of the tree $\tree$.
Let each region $\region_i$ be assigned a slot set $\slotset_{\region_i}$.
Thus it needs at most $\interfpara\slotset_{\region_i}$ in each level.
Because there are totally $\levelmax$ levels and the lower level $\level_i$ is active earlier than the higher level $\level_j$ ($i>j$), the sink costs
$\interfpara\slotset_{\region_i}\levelmax$ time to collect all data.
Because $\slotset_{\region_i}\leq \periodsize$, $\levelmax\leq\graphr$ and $\slotset_{\region_i}\leq\periodsize$, $\interfpara\slotset_{\region_i}\levelmax\leq\interfpara\periodsize\graphr$, the maximal delay of the data aggregation by our method is $\interfpara\periodsize\graphr+\degreemax^2$.
\end{proof}

Any schedule has delay at least $\graphr$ (or $D$), where $D$ is the radius of a network~\cite{li2009Efficient}.
When $\sink$ locates at the  center of a topology, the delay lower boundary can be reduced to be $(\interfpara\periodsize+1)D+\degreemax^2$, where $D=\graphr/2$~\cite{li2009Efficient}.
\section{Asynchronous Demand Implementation}
\label{sec:asynchronous demand implementation}
\protocol does not require the global clock synchronization.
This section aims to analyze the delay of data aggregation under asynchronization.
Existing algorithms are designed to bound the delay, such as time slot assignment algorithm in~\cite{WanHWWJ09}.
However, neighboring nodes may not be physically connected under asynchronization.
Thus an additional method is given to ensure each pair of neighboring nodes are physically connected in the subsection~\ref{subsection:quorum share}.
\subsection{Asynchronous Delay of Data Aggregation}
\label{subsection:Asynchronous Delay of Data Aggregation}
We assume that the clock shift $\clockshift^u(\realtime)$ of a node $u$ randomly and uniformly distributes in the interval $[-\infty,\infty]$, where $t^u$ and $t$ are local time and exact time.
When $u$ selects a quorum $\quorum_u$, $u$ is actually active in $\rotate(\quorum_u, \clockshift(t))$ because of the clock shift  $\clockshift$.
According to Algorithm~\ref{algorithm:slot set assignment}, every region is assigned a period in every $\interfpara$ periods.
Without loss of generality, $u\in\region_u$ is active in a period $\period_{u+i\interfpara}$, where $i=1,2,\cdots$.
For a pair of neighboring nodes $u$ and $v$, the relative clock shift between them is $\clockshift^{uv}(t)=t^u-t^v$.
Thus $u$ and $v$ have common active time slots only if $\rotate(\quorum_v, \clockshift^{uv}(t))\cap\quorum_u\neq\varnothing$, where $\quorum_u\in\period_{u+i\interfpara}$.
We have the following lemma.
\begin{lemma}\label{lemma:asynchronous physical connectivity}
If $u$ and $v$ locate in the same region and $i-1<\frac{\clockshift^{uv}(t)}{\period\interfpara}<i+1$, where $i=1,2,\cdots$, $u$ and $v$ are physically connected.
\end{lemma}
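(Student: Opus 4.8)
The plan is to reduce the statement to the existence of one \emph{common active period} for $u$ and $v$ --- a real-time interval during which both nodes are simultaneously inside one of their scheduled active periods --- and then to invoke Lemma~\ref{lemma:asynchronous QSTS conncectivity}. First I would note that, because $u$ and $v$ lie in one common region $\region$, Phase~II assigns $\region$ a single QS, so inside $\region$ both nodes draw their quorums $\quorum_u,\quorum_v$ from the \emph{same} grid QS $\QS_g$ (Algorithms~\ref{algorithm:slot set assignment} and~\ref{algorithm:determinate quorum selection}). A grid QS obeys the rotation closure property (Lemma~\ref{lemma:rotation satisfication}), so $\quorum_u\cap\rotate(\quorum_v,s)\neq\varnothing$ for \emph{every} shift $s$, in particular for $s=\clockshift^{uv}(\realtime)\bmod\periodsize$; this is exactly the situation covered by Lemma~\ref{lemma:asynchronous QSTS conncectivity}. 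Hence \emph{whenever} $u$ and $v$ are both inside an active period at the same real time, Lemma~\ref{lemma:asynchronous QSTS conncectivity} yields $|\rotate(\quorum_v,\clockshift^{uv}(\realtime))\cap\quorum_u|\geq\lceil\demand_u\demand_v\periodsize/(4\datarate^2)\rceil\geq1$, i.e. at least one rendezvous slot, which is precisely physical connectivity. So everything reduces to producing one common active period under the stated hypothesis.

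The remaining step is the real-time bookkeeping. By Algorithm~\ref{algorithm:slot set assignment} the region $\region$ is active once in every $\interfpara$ periods, so in real time the active periods of $u$ recur with spacing $\interfpara$ periods, i.e. $\period\interfpara$ slots, and the same holds for $v$ --- the two nodes run the identical per-region schedule and differ only through their clocks, so $v$'s copy of that train of active windows is displaced by the relative clock shift $\clockshift^{uv}(\realtime)$. I would then pair $u$'s $(k+i)$-th active window with $v$'s displaced $k$-th active window: the hypothesis $i-1<\clockshift^{uv}(\realtime)/(\period\interfpara)<i+1$ confines the offset between the starting instants of this pair to $(-\period\interfpara,\period\interfpara)$, i.e. to within one active-period spacing of $i$ full cycles, which is the regime meant to make the two length-$\periodsize$ windows meet. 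Taking any real-time instant in that overlap gives the common active period demanded by the first paragraph and finishes the argument.

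The main obstacle is the second paragraph, not the first: one must keep the discrete and continuous parts apart --- the integer ``which cycle'' index $i$, governed by $\clockshift^{uv}(\realtime)/(\period\interfpara)$, versus the fractional ``where inside the period'' part, which is a residue mod $\periodsize$ harmlessly absorbed by rotation closure --- and then check carefully that the stated window $\bigl((i-1)\period\interfpara,(i+1)\period\interfpara\bigr)$ really does force the two periodic active-window trains to overlap instead of missing each other by a narrow margin (tightening the constant if necessary). A smaller point to pin down along the way is the shared-QS claim: that $u$ and $v$ both lying in the single region $\region$ does make them pick their quorums from one common $\QS_g$ under Algorithms~\ref{algorithm:slot set assignment} and~\ref{algorithm:determinate quorum selection}.
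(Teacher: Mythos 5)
The paper gives no proof of this lemma --- it is left to follow from the two sentences preceding it --- and your first paragraph is indeed the intended reduction: $u$ and $v$ lying in one region draw $\quorum_u,\quorum_v$ from the same grid QS, and Lemma~\ref{lemma:asynchronous QSTS conncectivity} (rotation closure) absorbs the part of the shift that is a residue modulo $\periodsize$, so everything hinges on exhibiting one real-time overlap of the two nodes' active windows. The genuine gap is exactly the step you defer in your closing paragraph. The region's active windows have length $\period$ and recur with spacing $\interfpara\period$; overlap of $u$'s $(k+i)$-th window with $v$'s displaced $k$-th window requires $|\clockshift^{uv}(t)-i\,\interfpara\period|<\period$, i.e. $i-1/\interfpara<\clockshift^{uv}(t)/(\interfpara\period)<i+1/\interfpara$, which is strictly stronger than the stated $i-1<\cdot<i+1$ as soon as $\interfpara\geq 3$: for instance $\clockshift^{uv}(t)=i\,\interfpara\period+\tfrac{3}{2}\period$ satisfies the hypothesis, yet its distance to every multiple of $\interfpara\period$ is at least $\tfrac{3}{2}\period$, so every pairing of windows (including the adjacent ones $i\pm1$) misses and the two nodes share no active slot at all. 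Hence the ``careful check'' you postpone cannot be carried out for the window $\bigl((i-1)\interfpara\period,(i+1)\interfpara\period\bigr)$; your argument proves the statement only with the hypothesis tightened to within one period of $i\,\interfpara\period$, or with an additional repair such as the quorum-share mechanism of Algorithm~\ref{algorithm:quorum share}, which the paper introduces in Section~\ref{subsection:quorum share} precisely because region-level period assignment can leave shifted neighbours without common active slots.

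A smaller caveat sits in your first paragraph: when the two windows overlap only partially, the shift acts on $\quorum_v$ linearly rather than cyclically --- the slots of $\quorum_v$ that wrap around fall into an adjacent period that is assigned to a different region and is not active for $u$ --- so the cyclic-intersection guarantee of Lemma~\ref{lemma:asynchronous QSTS conncectivity} does not by itself place a rendezvous slot inside the overlap. This subtlety affects the paper's own sketch as much as your write-up, but once you tighten the constant you should also justify why a common slot survives this truncation (for example, directly from the row/column structure of the grid quorums) instead of citing rotation closure alone.
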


According to Lemma~\ref{lemma:asynchronous physical connectivity}, nodes may not be physically connected because the data aggregation scheme is adopted under the clock asynchronization.
Notice that it is not caused by \protocol.

Asynchronous clock causes additional delay on the delay of data aggregation in order to ensure each pair of $u$ and $v$ could have common active time slots to communicate with each other in $\period_u$ under asynchronization.
That means $u$ and $v$ have to postpone their communication because of the clock shift $\clockshift^{uv}$.
But they can communicate with each other within at most $\interfpara-1$ additional periods delay if they are physically connected according to Lemma~\ref{lemma:asynchronous physical connectivity}.
Therefore, we can obtain the delay of data aggregation based on Theorem~\ref{theorem: the delay boundary} as illustrated in the following lemma.
\begin{lemma}
The delay of data aggregation is $O((2\interfpara-1)\period\graphr+\degreemax^2)$ under asynchronization if each pair of neighboring nodes are physically connected.
\end{lemma}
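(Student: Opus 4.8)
The plan is to combine the synchronous delay bound of Theorem~\ref{theorem: the delay boundary} with the per-edge postponement cost identified in Lemma~\ref{lemma:asynchronous physical connectivity}, treating the asynchronous schedule as the synchronous one ``padded'' by at most $\interfpara-1$ extra periods at each level of the tree. First I would recall from Theorem~\ref{theorem: the delay boundary} that under synchronization the sink collects all data in time $\interfpara\slotset_{\region_i}\levelmax+\degreemax^2$, which is bounded by $\interfpara\periodsize\graphr+\degreemax^2$ because $\slotset_{\region_i}\leq\periodsize$ and $\levelmax\leq\graphr$. The $\degreemax^2$ term comes from the intra-region transmission cost $\max|\QS|=\degreemax^2$ (Lemma~\ref{lemma:maxsize for a QS}) and is unaffected by clock shift, since within a region all nodes share one assigned period and the rotation-closure property (Lemma~\ref{lemma:asynchronous QSTS conncectivity}) already guarantees rendezvous regardless of the relative shift; so I would isolate that term and argue it carries over verbatim.

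Next I would account for the asynchronous overhead. By hypothesis every pair of neighboring nodes is physically connected, so Lemma~\ref{lemma:asynchronous physical connectivity} applies and each parent--child pair can complete its exchange within at most $\interfpara-1$ additional periods beyond the period in which it would have communicated under synchronization. The key step is to show this extra cost accumulates additively, not multiplicatively, down the tree: because Algorithm~\ref{algorithm:slot set assignment} and Lemma~\ref{lemma:slot first or later} ensure a child is active strictly before its parent and a parent transmits only after receiving from all children in its period, the pipeline still advances one level per assigned period, and the only change is that each level now occupies up to $\interfpara-1$ more periods before its data is handed up. Summing over the $\levelmax\leq\graphr$ levels, the synchronous cost of $\interfpara\periodsize$ per level becomes at most $\interfpara\periodsize+(\interfpara-1)\period$ per level, i.e.\ $(2\interfpara-1)\period$ per level (absorbing $\periodsize$ into $\period$ up to the $\datarate$ factor, or simply noting $\periodsize\leq\period$ in the relevant units), giving total delay $O((2\interfpara-1)\period\graphr+\degreemax^2)$.

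The main obstacle I anticipate is making the ``additive accumulation'' claim fully rigorous: one must verify that the worst-case relative shifts $\clockshift^{uv}(t)$ on different edges along a root-to-leaf path cannot conspire to force a child's data to wait for a full extra cycle of $\interfpara$ periods repeatedly, which would turn the bound multiplicative. The cleanest way to handle this is to fix, for each region along the path, the first period (in global time, after the synchronous schedule's nominal start for that level) in which the rendezvous condition of Lemma~\ref{lemma:asynchronous physical connectivity} first holds; that period is within $\interfpara-1$ of the nominal one, and since a node only needs to forward upward once its own children have delivered, the induction on level shows the cumulative slack is at most $(\interfpara-1)$ periods \emph{per level traversed}, not per retry. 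I would also note, as the paper does for the synchronous case, that if the sink sits at the topological center the $\graphr$ can be replaced by $\graphr/2=D$, tightening the leading constant, but the stated $O(\cdot)$ bound follows directly from the two ingredients above.
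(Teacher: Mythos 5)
Your proposal is correct and follows essentially the same route as the paper: the paper also obtains this bound by taking the synchronous bound of Theorem~\ref{theorem: the delay boundary} and adding, via Lemma~\ref{lemma:asynchronous physical connectivity}, at most $\interfpara-1$ extra periods per level for each parent--child rendezvous, yielding $(2\interfpara-1)\period$ per level over the $\graphr$ levels plus the unchanged $\degreemax^2$ term. Your additional care about why the slack accumulates additively rather than multiplicatively is a sound elaboration of the paper's (much terser) argument.
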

%

\subsection{Quorum Share}\label{subsection:quorum share}
Although Lemma~\ref{lemma:asynchronous QSTS conncectivity} ensures each pair of nodes (including non-neighboring) are physically connected,
Lemma~\ref{lemma:asynchronous physical connectivity} indicates each pair of neighboring nodes must be unable to communicate with each other in some periods since Algorithm~\ref{algorithm:slot set assignment} assigned each region with discontinuous periods.
This section designs a scheme to solve this problem.
Suppose $\region_u$ is assigned periods $\period_i$, $i=0,\interfpara, 2\interfpara, \cdots$, and the clock shift of a node $v\in\region_u$ is $\clockshift^v$ and $\quorum_v$ contains the time slots set $\subperiod_v$.
When $\subperiod'_v=\subperiod_v+\{\clockshift^v\}$ locates in the periods which does not satisfy the inequality in Lemma~\ref{lemma:asynchronous physical connectivity}, $v$ conflicts with some nodes, \ie the quorum $\quorum_v$ shifts into the QS of some other nodes $x_i$, $i=1,2,
\cdots, \leq |\communicateset_x|$, in another region $\region_x$, where $v$ locates in the interference range of $x_i$.
Notice that there are two kinds of quorum shifting: (1) a quorum $\quorum_v$ only shifts between periods when $\clockshift\ mod\ |\period|=0$; (2) a quorum $\quorum_v$ shifts among a QS when $\clockshift\ mod\ |\period|\neq 0$.

In our scheme described in Algorithm~\ref{algorithm:quorum share},  $v$ and $x_i$, $i=1,2, \cdots, \leq |\communicateset_x|$ share the quorum $\quorum_v$ with equal probability.
Our scheme can deal with two kinds of quorum shifting.
\begin{algorithm}
\caption{Quorum Share}
\label{algorithm:quorum share}
\begin{algorithmic}[1]
\STATE Each node $v$ sets a list $L_t$ storing the nodes' ID, which occupy $v$'s quorum $\quorum_v$.
\WHILE{$v$ detects that $\quorum_v$ is occupied by the nodes not in its one-hop neighborhood.}
\STATE $v$ sets a positive natural number $k$ to be $k +=1$;
\STATE $v$ sets itself to be active in periods $T_i$, where $i=0,k\interfpara, 2k\interfpara,\cdots$;
\ENDWHILE
\end{algorithmic}
\end{algorithm}

By Algorithm~\ref{algorithm:quorum share}, each pair of neighboring nodes are physically connected even when the time slot allocation algorithm is implemented in Algorithm~\ref{algorithm:slot set assignment}.
According to Algorithm~\ref{algorithm:quorum share}, $u$ would be active in $\period_i$, where $i=0, k\interfpara, 2k\interfpara, \cdots$.
It means that $u$ is active in part of periods.
Since the grid QS satisfies the rotation closure property, $u$ is still physically connected with its neighboring nodes.
Therefore, we can obtain Lemma~\ref{lemma:quorum share physically connected}.
\begin{lemma}\label{lemma:quorum share physically connected}
A pair of neighboring nodes can be physically connected by Algorithm~\ref{algorithm:quorum share} under asynchronization.
\end{lemma}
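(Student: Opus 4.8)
\textbf{Proof proposal for Lemma~\ref{lemma:quorum share physically connected}.}

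The plan is to show that after Algorithm~\ref{algorithm:quorum share} terminates, any neighboring pair $u$ and $v$ still shares at least one active time slot in some common period, by combining the rotation closure property of the grid QS with the fact that the ``share'' loop only thins out (never destroys) a node's active periods. First I would fix a neighboring pair $u,v\in\region_u$ (the cross-region case reduces to this once quorum share has resolved the conflict, since a shifted quorum $\quorum_v$ that intrudes on $\region_x$ is, after Algorithm~\ref{algorithm:quorum share}, time-shared with the $x_i$ so that $v$ is active in a sub-sequence of periods that is still infinite). Let $k_u$ and $k_v$ be the final counters reached by $u$ and $v$ in the while-loop; then $u$ is active exactly in the periods $\period_i$ with $i\in\{0,k_u\interfpara,2k_u\interfpara,\dots\}$ and similarly for $v$ with $k_v$. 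The key combinatorial observation is that the two arithmetic progressions $\{j\,k_u\interfpara\}$ and $\{j\,k_v\interfpara\}$ intersect in infinitely many indices — namely all multiples of $\mathrm{lcm}(k_u,k_v)\interfpara$ — so there are infinitely many periods in which \emph{both} $u$ and $v$ are scheduled to be active simultaneously.

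Next I would invoke the within-period connectivity already established. On any period in which both $u$ and $v$ are active, their active slot sets are $\rotate(\quorum_u,\clockshift^u)$ and $\rotate(\quorum_v,\clockshift^v)$; because $\QS_g$ satisfies the rotation closure property (Lemma~\ref{lemma:rotation satisfication}), we have $\rotate(\quorum_v,\clockshift^{uv})\cap\quorum_u\neq\varnothing$ regardless of the relative shift $\clockshift^{uv}=\clockshift^u-\clockshift^v$, exactly as argued in the proof of Lemma~\ref{lemma:asynchronous QSTS conncectivity}. Hence every common-active period already contains a rendezvous slot, and in fact at least $\lceil\frac{\demand_u\demand_vm}{4\datarate^2}\rceil$ of them by Lemma~\ref{lemma:asynchronous QSTS conncectivity}. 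Combining the two facts — infinitely many common-active periods, each containing a rendezvous slot — gives that $u$ and $v$ are physically connected, which is the claim. I would also note that the while-loop terminates: each time a collision with a non-one-hop node is detected, $k$ increases, which stretches the period spacing and eventually separates $v$'s shifted quorum from the intruded region's QS, so the loop exits with finite $k_u,k_v$.

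The main obstacle I expect is making the cross-region argument rigorous: Lemma~\ref{lemma:asynchronous physical connectivity} only certifies in-region connectivity under the bound $i-1<\clockshift^{uv}/(\period\interfpara)<i+1$, and one must argue that the thinned schedule $\{j k_v\interfpara\}$ still lands in a period satisfying that inequality for \emph{some} $j$ — equivalently, that stretching the spacing by the integer factor $k_v$ cannot permanently skip over the window of ``good'' periods. This is true because that window has width $2\period\interfpara$ in real time while consecutive active periods of $v$ are $k_v\interfpara$ apart, and the relevant comparison after the share step is against the \emph{rescaled} spacing $k_v\interfpara$, so the good-window condition of Lemma~\ref{lemma:asynchronous physical connectivity} should be restated with $\interfpara$ replaced by $k_v\interfpara$; I would spell that substitution out explicitly. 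A secondary, lesser obstacle is handling the two shifting regimes distinguished in Section~\ref{subsection:quorum share} ($\clockshift\bmod|\period|=0$ versus $\neq 0$) uniformly — but the rotation closure property covers both, since it quantifies over all rotations $i\in\{0,\dots,\periodsize-1\}$, so the same intersection argument applies verbatim in each case.
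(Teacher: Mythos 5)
Your core argument is exactly the paper's: the paper's entire proof is the two-line observation that $\quorum_u\cap\quorum_v\neq\varnothing$, that the relative clock shift $\clockshift^{uv}$ acts on the slot set as a rotation ($\subperiod_u'=\subperiod_u+\{\clockshift^{uv}\}$), and that the rotation closure property of the grid QS then gives $\subperiod_u'\cap\subperiod_v\neq\varnothing$ — which is precisely your ``within-period'' step invoking Lemma~\ref{lemma:rotation satisfication} and Lemma~\ref{lemma:asynchronous QSTS conncectivity}. Where you genuinely diverge is in everything wrapped around that step: the paper makes no attempt to account for the fact that Algorithm~\ref{algorithm:quorum share} thins each node's schedule to periods $\period_i$ with $i\in\{0,k\interfpara,2k\interfpara,\dots\}$, whereas you explicitly argue that the two arithmetic progressions meet at multiples of $\mathrm{lcm}(k_u,k_v)\interfpara$, that the while-loop terminates with finite $k_u,k_v$, and that the ``good-window'' condition of Lemma~\ref{lemma:asynchronous physical connectivity} should be restated with $\interfpara$ replaced by $k_v\interfpara$. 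This buys a proof that actually engages with what Algorithm~\ref{algorithm:quorum share} does, at the cost of one remaining soft spot that you yourself flag: the period indices in the two progressions are counted in each node's \emph{local} clock, so an index-level coincidence does not by itself give simultaneous real-time activity when $\clockshift^{uv}$ exceeds a whole period, and your proposed rescaling of the window condition is sketched rather than carried out. The paper avoids this difficulty only by ignoring it entirely (it treats the shift purely as a rotation inside one period), so your version is strictly more faithful to the algorithm even though that final alignment step would need to be spelled out to be fully rigorous.
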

\begin{proof}
Suppose a pair of neighboring nodes $u$ and $v$ respectively select quorums $\quorum_u$ and $\quorum_v$.
Thus $\quorum_u\cap\quorum_v\neq\varnothing$.
$\quorum_u$ and $\quorum_v$ respectively contain the time slot set $\subperiod_u$ and $\subperiod_v$.
Denote the relative clock shift  between $u$ and $v$ is $\clockshift^{uv}$.
Thus $\subperiod_u'=\subperiod_u+\{\clockshift^{uv}\}$.
Since the grid QS, $\subperiod_u'\cap\subperiod_v\neq\varnothing$.
\end{proof}

In Algorithm~\ref{algorithm:quorum share}, each node $u$ would be active in part of periods under asynchronization.
We find that the whole network delay is prolonged while each pair of neighboring nodes are guaranteed to be physically connected.
The demand of $u$ would be implemented lingeringly because the quorum $\quorum_u$ of $u$ moves to $\quorum'_u$ in another period as shown in Figure~\ref{fig:quorum shift}.
We suppose $\quorum'_u$ locates in the period $\period_1$ and $\period_1$ is originally assigned to the nodes in the region $\region_1$.
So $u$ would share the same quorum with some nodes in $\region_1$.
At the worst case, all nodes in $\region_1$ shunt one turn, \ie they are active in $0, 2\interfpara, 4\interfpara, \cdots$.
The delay caused by the clock shift is at most $\clockshift^u$ periods when there is only $u$ which has clock shift under the interference model $\interferenceM$.
When the clocks of every nodes shift, their quorum may also shift.
For example, $\quorum_i$ and $\quorum_j$ respectively move to new places, such as $\quorum'_i$ and $\quorum'_j$, which locate in different periods in Figure~\ref{fig:quorum shift}.
The worst case is that $\interfpara-1$ regions shift into one regions, thus the additional delay is at most $\interfpara(\interfpara-1)\periodsize$.
According to Theorem~\ref{theorem: the delay boundary}, we have the following lemma.
\begin{lemma}
By Algorithm~\ref{algorithm:quorum share}, the delay on the data aggregation is at most $O(\interfpara^2\periodsize\graphr+\degreemax^2)$ when the clocks are asynchronous.
\end{lemma}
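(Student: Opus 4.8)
The plan is to bootstrap from Theorem~\ref{theorem: the delay boundary}, which already gives the synchronous bound $O(\interfpara\periodsize\graphr+\degreemax^2)$, and then charge only the \emph{additional} latency that Algorithm~\ref{algorithm:quorum share} introduces through period reassignment. First I would recall the anatomy of that synchronous bound: the $\degreemax^2$ term is the intra-region draining cost (at most $\max|\QS|=\degreemax^2$ slots by Lemma~\ref{lemma:maxsize for a QS}), while the $\interfpara\periodsize\graphr$ term arises by charging, at each of the $\levelmax\le\graphr$ tree levels, one period of length $\periodsize$ per color, with at most $\interfpara$ colors used per level; Lemma~\ref{lemma:slot first or later} together with Algorithm~\ref{algorithm:determinate quorum selection} makes these per-level costs accumulate along the depth of $\tree$.

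Next I would analyze the effect of Algorithm~\ref{algorithm:quorum share}. When a node $v$ detects that its shifted quorum is occupied by nodes outside $N_1(v)$ — i.e. $\quorum_v$ has drifted into a period originally allocated to a foreign region — it increments $k$ and confines itself to periods $0, k\interfpara, 2k\interfpara,\dots$, time-sharing that foreign allocation. Lemma~\ref{lemma:quorum share physically connected} certifies that physical connectivity to its true neighbors survives because the grid QS is rotation-closed, and this covers both drift modes ($\clockshift\bmod|\period|=0$ and $\clockshift\bmod|\period|\neq0$) since the shifted quorum always lands inside exactly one foreign region's period. The key combinatorial fact is that only $\interfpara$ colors exist, so at most $\interfpara-1$ distinct foreign regions can simultaneously drift onto a single region's period; that period therefore has to service at most $\interfpara$ regions' worth of traffic instead of one, so the per-level charge grows from $\interfpara\periodsize$ to $\interfpara\cdot\interfpara\periodsize$ — an extra $\interfpara(\interfpara-1)\periodsize$ per level.

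Finally I would sum over the $\levelmax\le\graphr$ levels: the extra delay is at most $\interfpara(\interfpara-1)\periodsize\graphr$, and adding it to the synchronous bound gives
\[
\interfpara\periodsize\graphr+\interfpara(\interfpara-1)\periodsize\graphr+\degreemax^2=\interfpara^2\periodsize\graphr+\degreemax^2,
\]
i.e. $O(\interfpara^2\periodsize\graphr+\degreemax^2)$; the $\degreemax^2$ term is untouched because node degrees, and hence region sizes, are unchanged by the reassignment.

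The main obstacle is the combinatorial step: rigorously bounding by $\interfpara-1$ the number of foreign regions that can collide on one period, and showing the share mechanism terminates without cascading. For the first point I would combine the fact that Algorithm~\ref{algorithm:region coloring} colors any two regions within $\interfhop$ hops differently with the observation that a finite clock shift can only carry a quorum into the territory of finitely many pairwise-interfering regions, all of which therefore carry distinct colors. For termination I would use the standing assumption $\clockshift^v<+\infty$, which forces each drifting node to settle on a finite $k$, so that the per-level inflation never exceeds the factor $\interfpara$ claimed above.
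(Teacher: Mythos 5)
Your argument matches the paper's own reasoning: the paper likewise bounds the damage of Algorithm~\ref{algorithm:quorum share} by observing that in the worst case $\interfpara-1$ foreign regions shift into one region's period, charging an additional $\interfpara(\interfpara-1)\periodsize$ on top of the per-level cost and then invoking Theorem~\ref{theorem: the delay boundary} to obtain $O(\interfpara^2\periodsize\graphr+\degreemax^2)$. Your write-up is in fact somewhat more explicit than the paper's (which leaves the per-level accumulation and the coloring-based bound on colliding regions implicit), but it is essentially the same proof.
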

\begin{figure}[htb]
\centering \includegraphics[scale=1, bb=229 410 376 435]{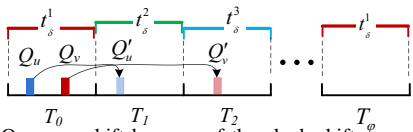}
\centering\caption{\footnotesize\label{fig:quorum shift}Quorums shift because of the clock shift.}
\end{figure} 
\section{Performance Evaluation}
\label{sec:experiment}
This paper evaluated \protocol and B-MAC in  a real testbed running TinyOS on TelosB motes.
The testbed composes of one hundred nodes.
We compare the performance \protocol against B-MAC on the throughput and packet receiving ratio (PRR).
\subsection{Experiment Setup}
We randomly deployed 100 nodes on an in-door test-bed.
Each sensor node works with its modified internal antenna, the transmission range of which can be as small as 10cm.
Thus nodes in the original network can still communicate with each other by multi-hop.
After deployment, we start our experiment, composing of two phases.
At first phase, all the nodes are initially set with 100\% duty cycle.
At the second phase, nodes in a same region selected their quorums according to their locations (parent or leaf node) in the tree and the number of the leaf nodes under \protocol.
The duty cycle is set 20\% under B-MAC.

Under \protocol, each $\QS$ contains 100 time slots, \ie $\periodsize=100$.
Each time slot is respectively set as $50ms$, $1s$, $2s$ and $5s$.
Each node samples data in every $20ms$, $50ms$, $100ms$, $200ms$, $300ms$, $500ms$, $800ms$, $1s$, $1.5s$ and $2s$, which are called as the data generation period in Figure~\ref{fig:throughput} and~\ref{fig:prr}.
When the experiment starts, the sink broadcasts a message to synchronize the clocks of all nodes.

\subsection{Performance Comparison}
In this section, we compare the performance of B-MAC and \protocol on the network throughput and PRR.
Although we care about the channel utility, fairness and energy consumption, the experiment results on two parameters, throughput and PRR, synthetically reflect the channel utility.

\textbf{Throughput}. Figure~\ref{fig:throughput} shows the network throughput respectively under \protocol and B-MAC.
Each node except the sink generates data at different rate.
Because the nodes should compete the channel access when transmitting each packet, much time is wasted.
When the time slot size is big, for example, 1s, 2s and 5s, the nodes should cost time on the channel access competition and any pair of neighboring nodes have much continual time to transmit packets.
As shown in Figure~\ref{subfig:slot1000}, \ref{subfig:slot2000} and \ref{subfig:slot5000},  the throughput under B-MAC is much lower than that under \protocol when the time slot size is bigger, such as $1s$, $2s$ and $5s$.
Although the network is synchronized at the right beginning of  the experiment, the clocks of all nodes shift off after a period of time.
Some of nodes scheduled to wake up at common time may mismatch especially when time slots are set to be very short.
The throughput under B-MAC is litter higher than that under \protocol when the time slots size is small, such as $50ms$.
Notice that the network throughput under \protocol does not change much when the time slot size changes.
But the time slot size has much effect on the throughput under B-MAC.
The throughput under both B-MAC and \protocol decrease with the increasing of the data generation period when the period is higher than $900ms$.
\begin{figure*}[htp]
\centering\subfigure[50ms]{\includegraphics[angle=-90,scale=.24, bb=56 31 435 483]{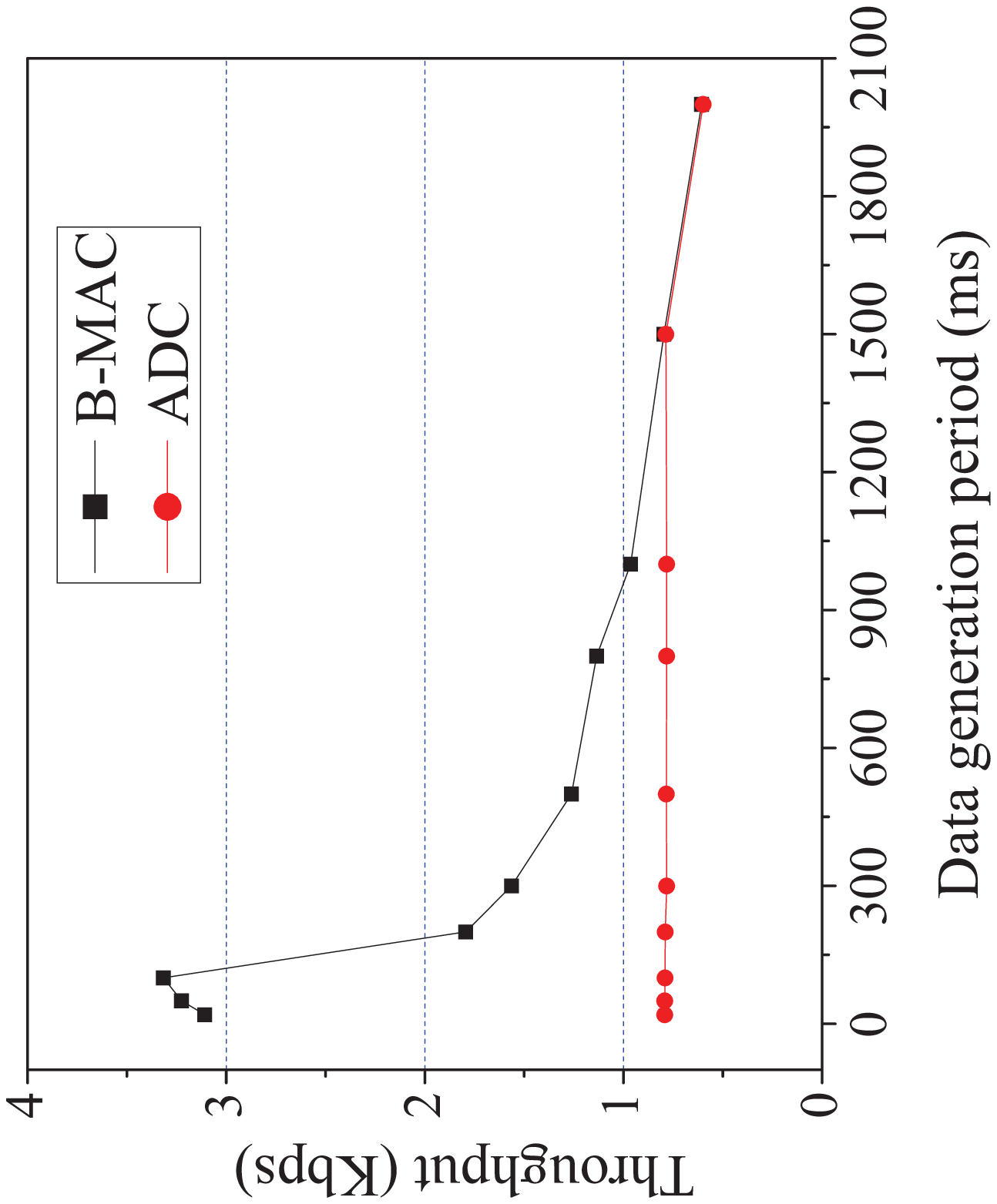}\label{subfig:slot50}}
\hspace{0.1cm}\subfigure[1000ms]{\includegraphics[angle=-90,scale=.24, bb=56 31 435 483]{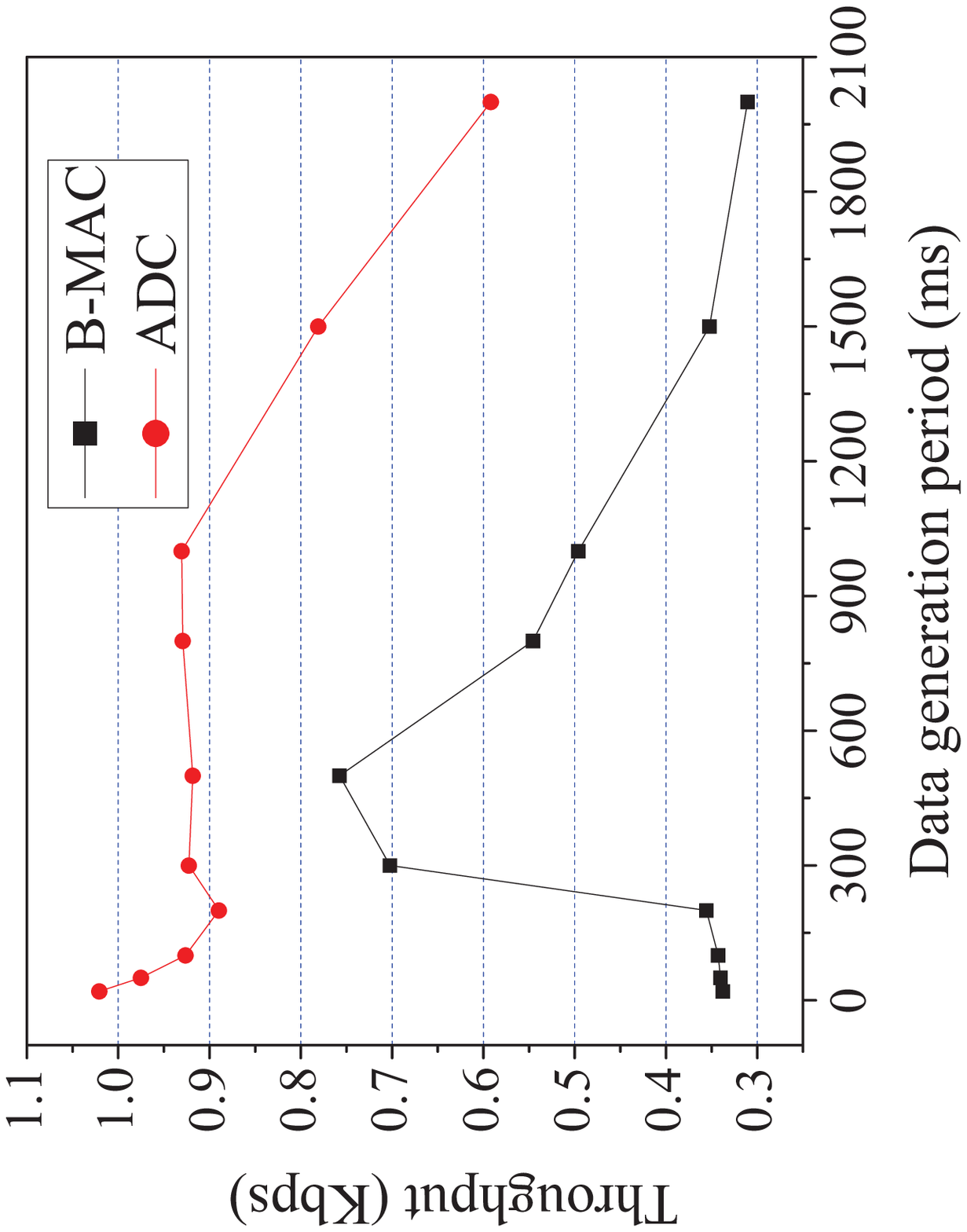}\label{subfig:slot1000}}
\hspace{0.1cm}\subfigure[2000ms]{\includegraphics[angle=-90,scale=.24, bb=56 31 435 483]{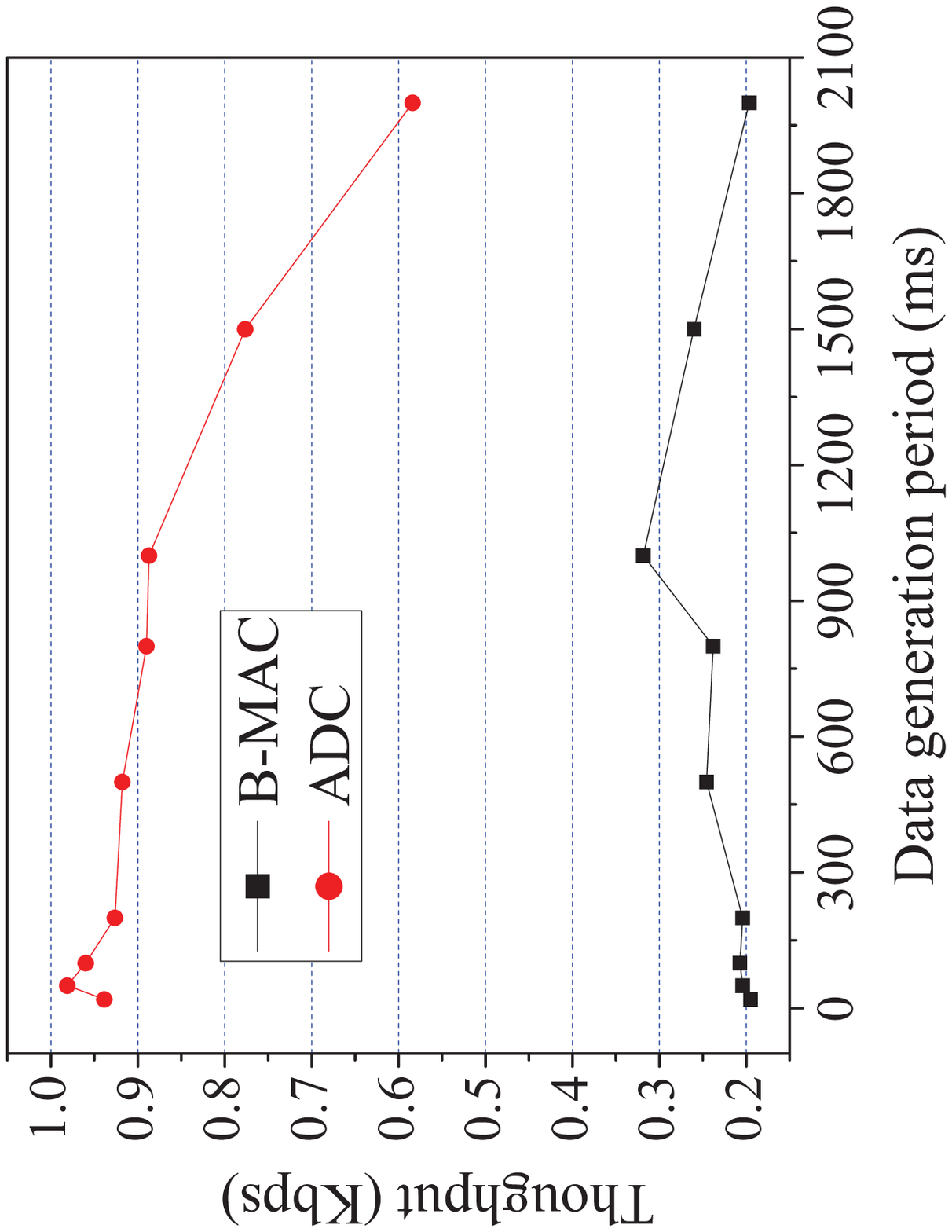}\label{subfig:slot2000}}
\hspace{0.1cm}\subfigure[5000ms]{\includegraphics[angle=-90,scale=.24, bb=56 31 435 483]{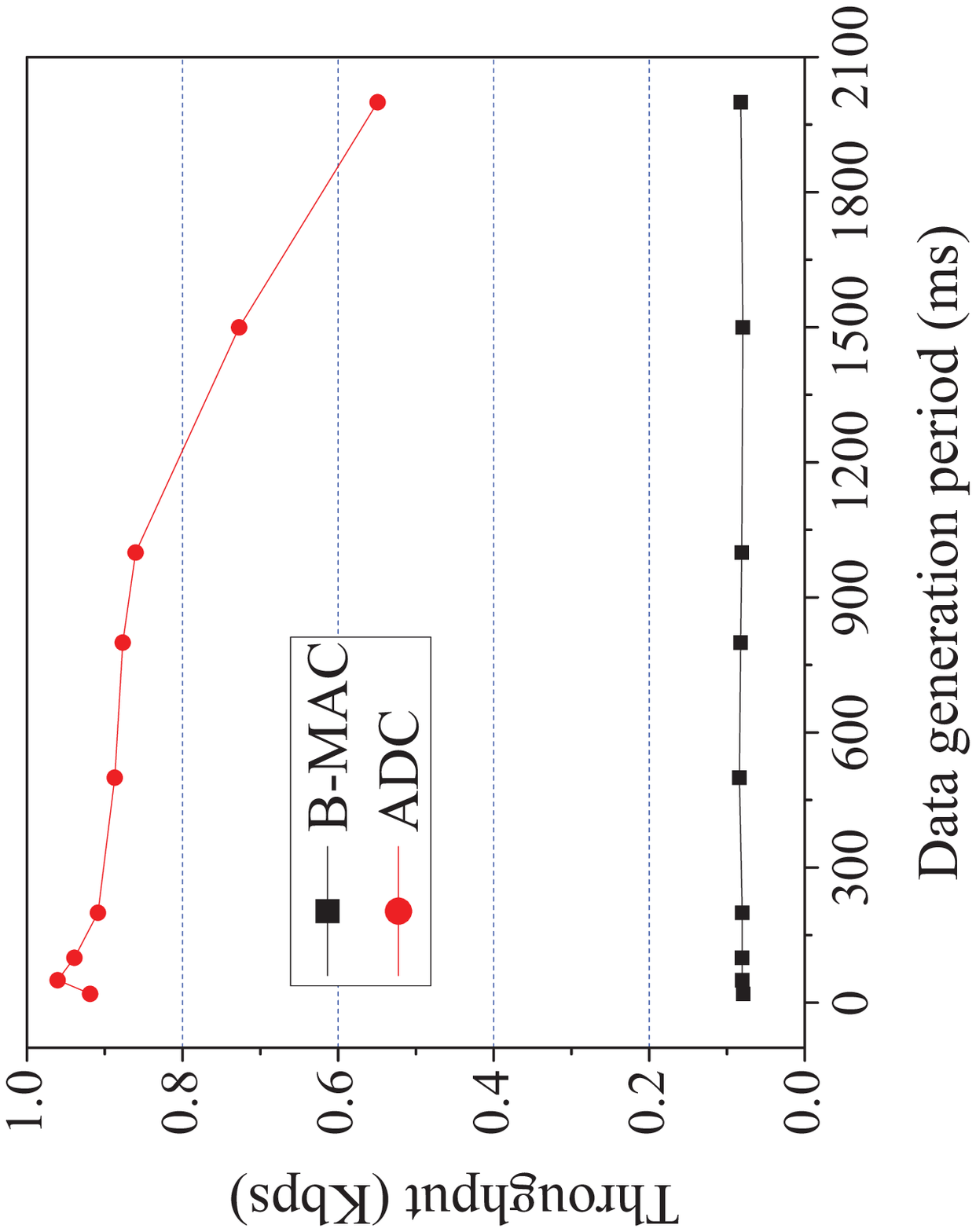}\label{subfig:slot5000}}
\centering\caption{\footnotesize\label{fig:throughput} The network throughput respectively under \protocol and B-MAC with different data generation periods.}
\end{figure*}

\textbf{PRR}. PRR reflects the channel utility within a network.
The PRRs under both B-MAC and \protocol increase with the increasing of the data generation period.
When the time slot size is big, such as, $1s$, $2s$ and $5s$, the PRR under \protocol is much higher than that under B-MAC.
The results are similar to those on the network throughput.
The time slots size has much effect on the throughput under B-MAC instead of that under \protocol.
The PRR under B-MAC increases with the decreasing of the time slot size.
\begin{figure*}[htp]
\centering\subfigure[50ms]{\includegraphics[angle=-90,scale=.23, bb=55 1 554 489]{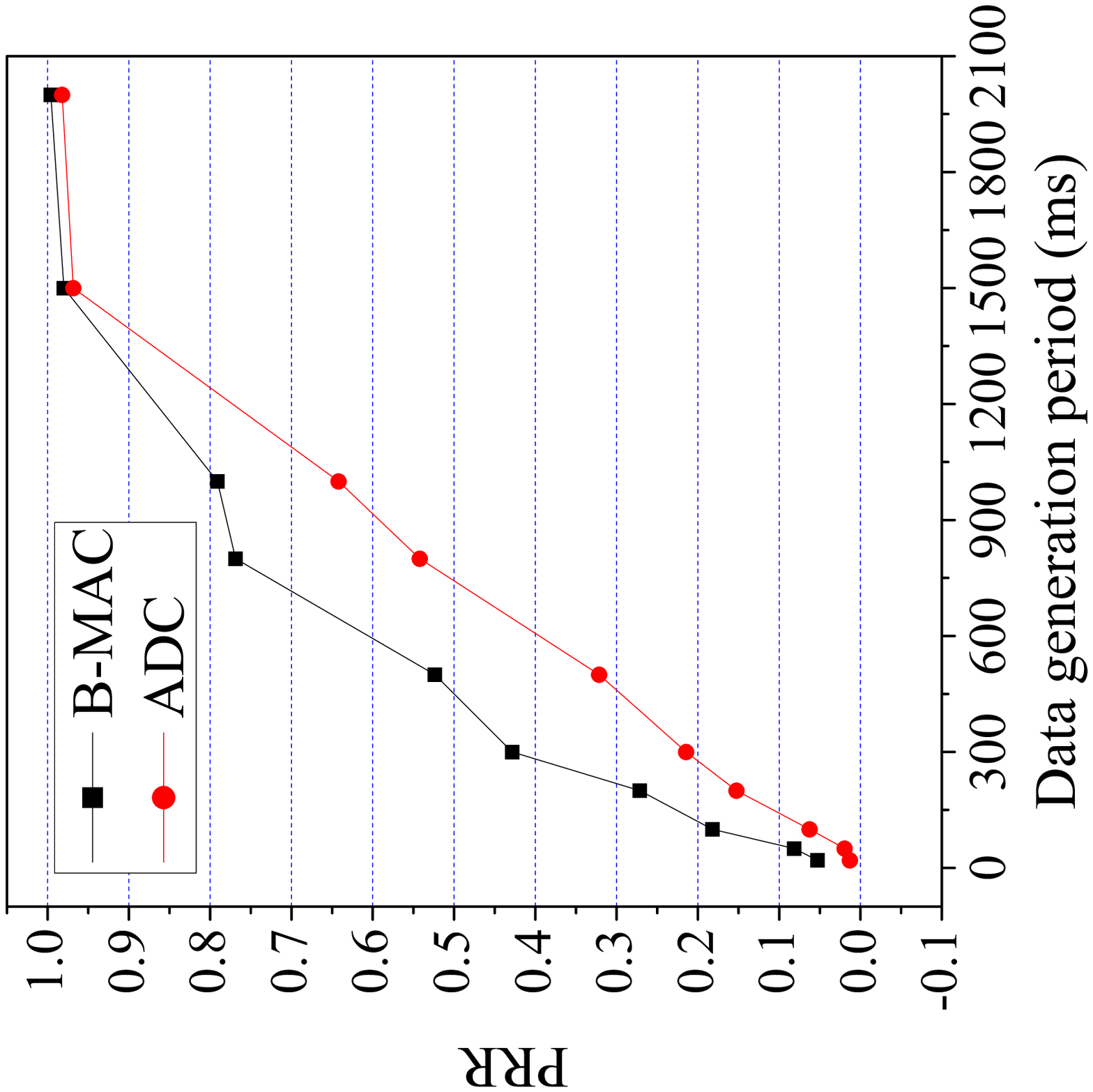}\label{subfig:prrslot50}}
\hspace{0.0cm}\subfigure[1000ms]{\includegraphics[angle=-90,scale=.23, bb=55 1 554 489]{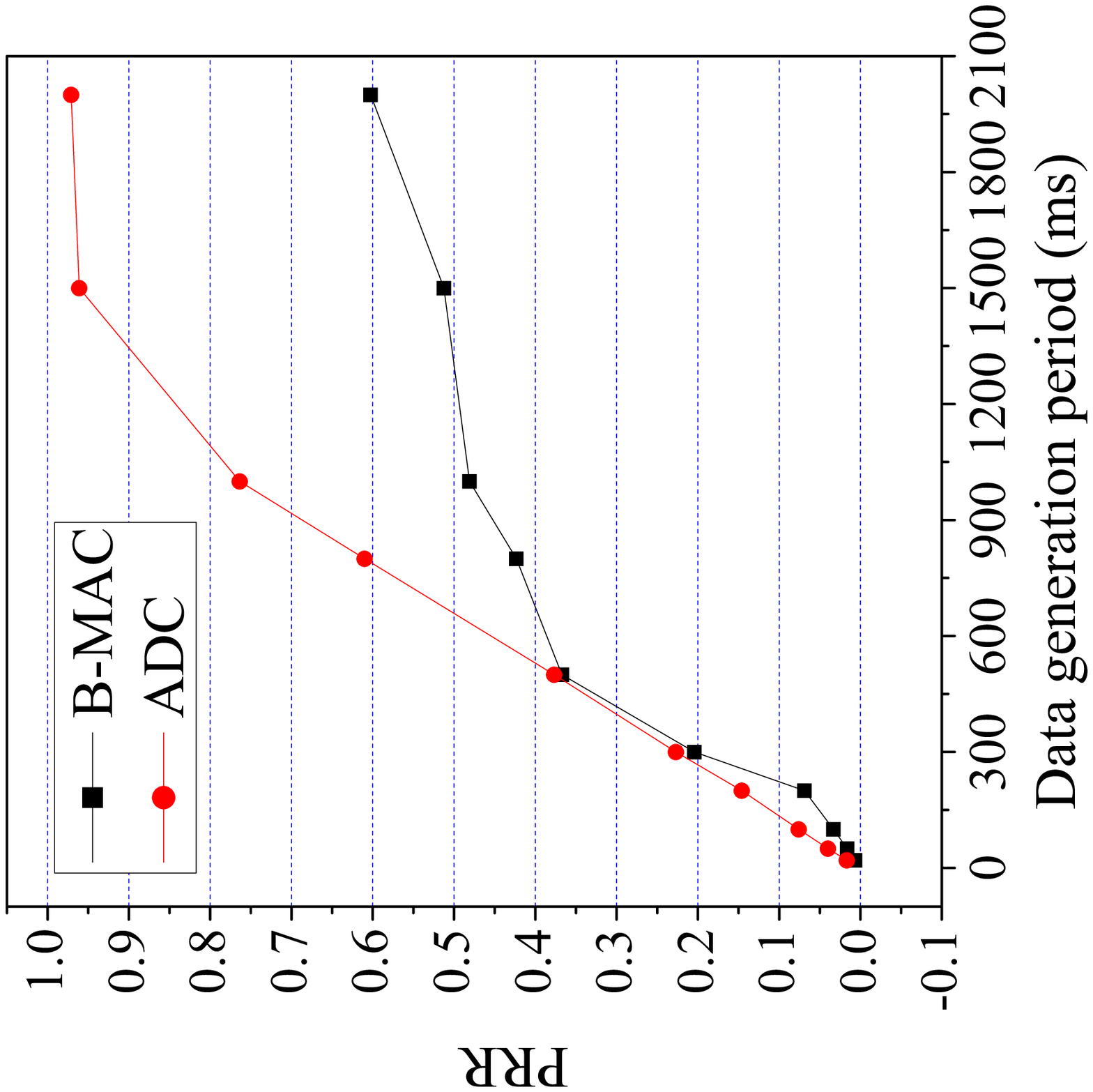}\label{subfig:prrslot1000}}
\hspace{0.0cm}\subfigure[2000ms]{\includegraphics[angle=-90,scale=.23, bb=55 1 554 489]{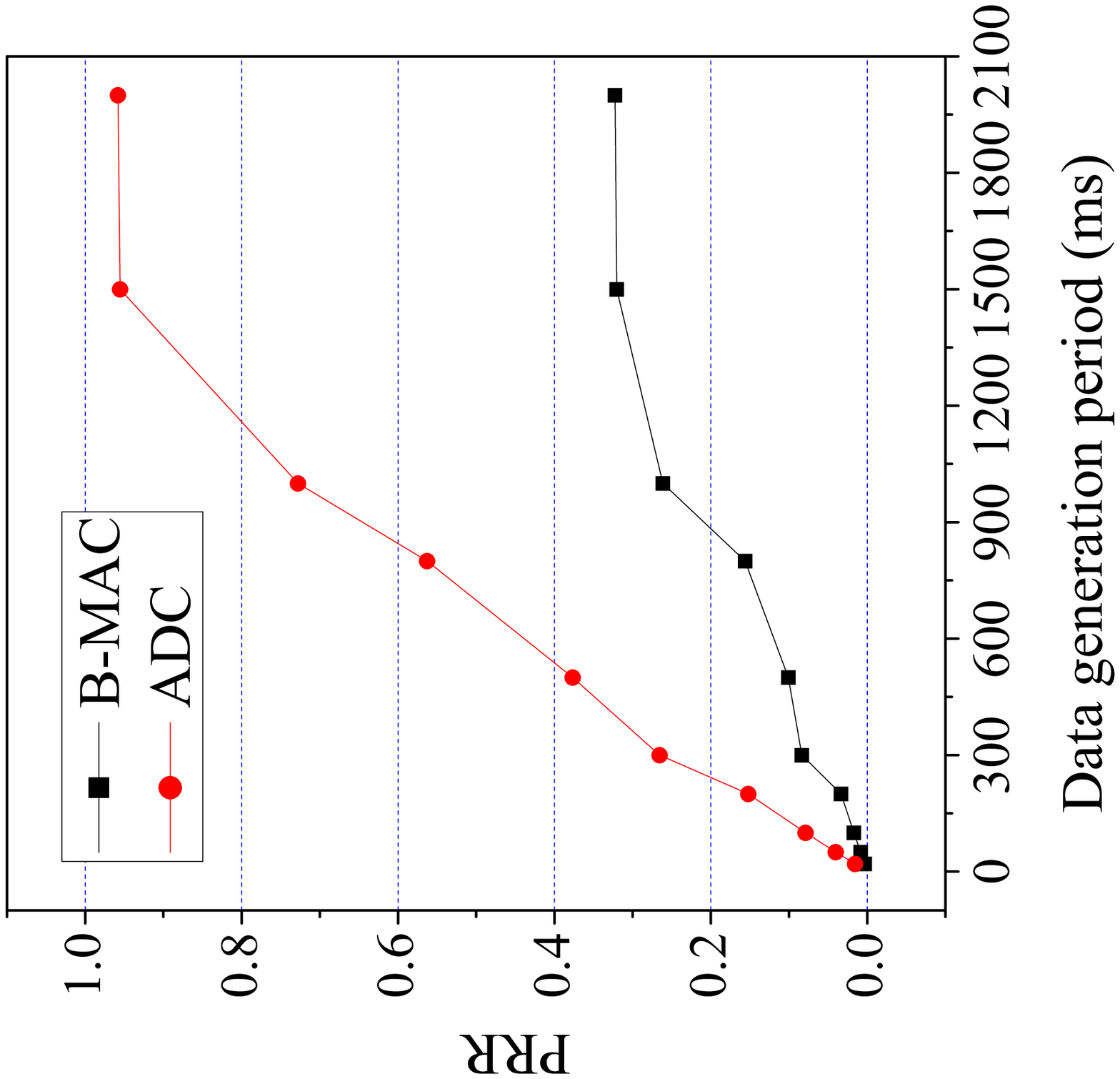}\label{subfig:prrslot2000}}
\hspace{0.0cm}\subfigure[5000ms]{\includegraphics[angle=-90,scale=.23, bb=55 1 554 489]{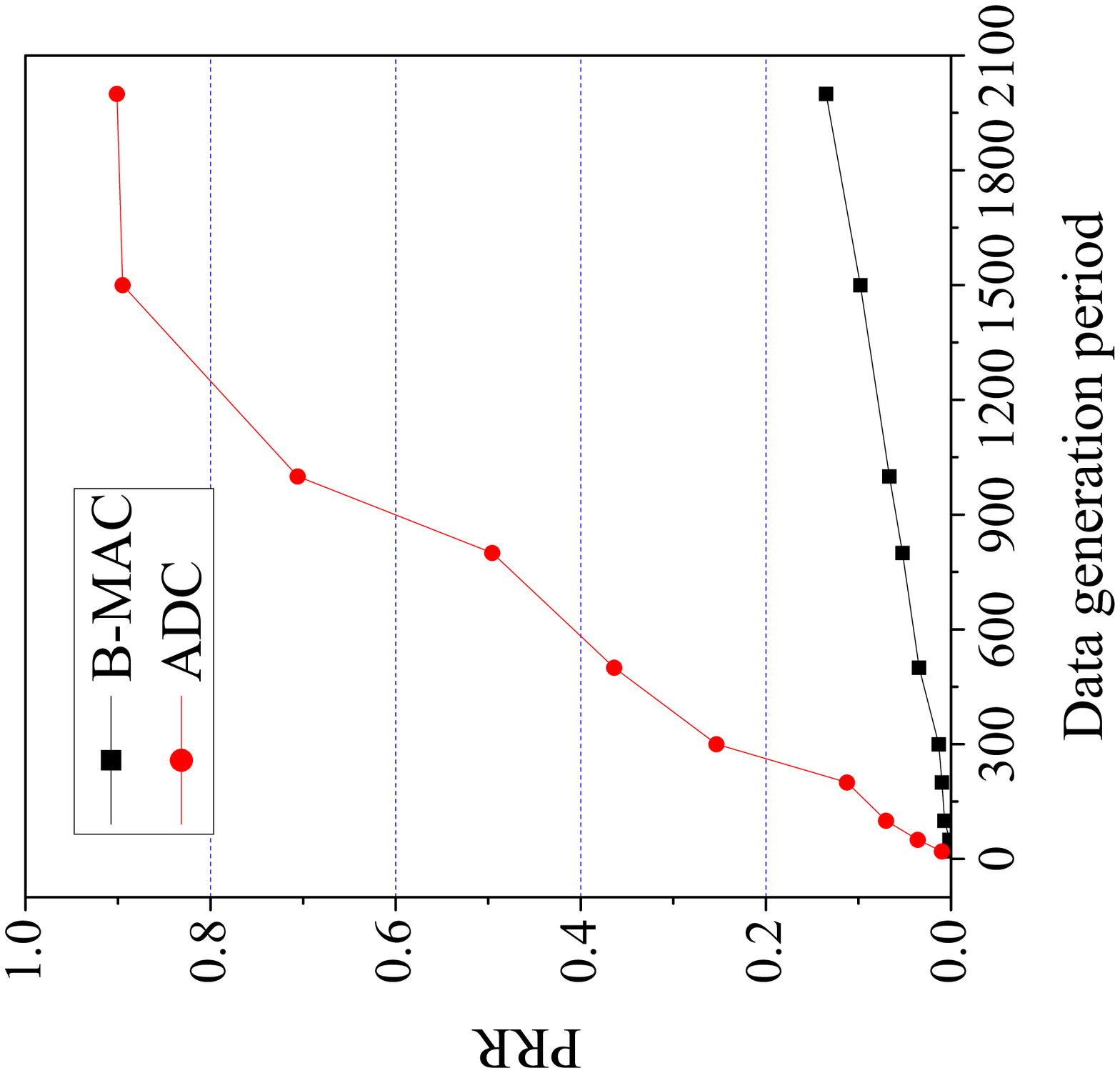}\label{subfig:prrslot5000}}
\centering\caption{\footnotesize\label{fig:prr} The network PRR respectively under \protocol and B-MAC with different data generation periods.}
\end{figure*}
\section{Related Work}
\label{sec:related work}
\subsection{Duty Cycle}
In WSNs, many works were put on duty-cycled networks as following.
\cite{gu2007data} designed a data forwarding technique to optimize the data delivery ration, end-to-end delay or energy consumption under low-duty-cycle by synchronized mode.
\cite{guo2009opportunistic} designed an opportunistic flooding scheme for low-duty-cycle networks with unreliable wireless links and predetermined wording schedules by locally synchronization.
\cite{wang2009duty} provided a benchmark for assessing diverse duty-cycle-aware broadcast strategies and extend it to distributed implementation.
\cite{hong2010minimum} minimized broadcast transmission delay by a set-cover-based approximation scheme with both centralized and distributed algorithms.
Using the $\beta$-synchronizer, a fast distributed algorithm built all-to-one shortest paths with polynomial message and time complexity~\cite{lai2010distributed}.
\cite{sun2009adb} designed an asynchronous duty-cycle broadcasting to let a node be active very long time when it need broadcast the data to a large number of neighbors.
~\cite{nath2007communicating} analyzed the performance of geographic routing over duty-cycled nodes and presented a sleeping scheduling algorithm that can be tuned to achieve a target routing latency.
~\cite{benson2008opportunistic} presented an alternative frame-let based LPL implementation to improve the network performance by opportunistically aggregating packets over the radio channel.
\subsection{MAC protocol}
%
%
Some protocols were designed to combine the advantages of TDMA and CSMA.
\cite{rhee2008z} proposed a hybrid MAC protocol, called Z-MAC, in which, a node always performs carrier-sensing before transmission.
Z-MAC consumes much energy on the carrier-sensing and also needs local synchronization among senders in two-hop neighborhoods.
S-MAC~\cite{ye2004medium} and T-MAC~\cite{van2003adaptive} employ RTS/CTS mechanism to solve the the synchronization failure.
Since these protocols use RTS/CTS, their overhead is quite high~\cite{rhee2008z}.
B-MAC~\cite{polastre2004versatile} is the default MAC in the operate system of Mica2 and adopts Low Power Listening (LPL) to solve the asynchronization.
Since LPL consumes much energy, X-MAC reduces the energy consumption and latency by employing short preamble and embedding address information of the target in the preamble~\cite{buettner2006x}.
So the non-target receivers can quickly go back to sleep and the energy is saved.
LPL based preamble transmission may occupy the medium for much longer time than actual data transmission.
So~\cite{sun2008ri} designed an asynchronous duty cycle MAC: RI-MAC.
It wastes energy especially under low traffic load and the interference is increased because of the periodical broadcasting of beacons.

MAC protocols are also designed to reduce energy consumption, such as S-MAC~\cite{ye2004medium} and T-MAC~\cite{van2003adaptive}.
\cite{zheng2003asynchronous}  considered LPL approaches, such as WiseMAC and B-MAC, are limited to duty cycles of 1-2\% and  designed a new MAC protocol called scheduled channel polling (SCP) to ensure that duty cycles of 0.1\% and below are possible.
\cite{zheng2003asynchronous} dynamically adjusts duty cycles in the face of busy networks and streaming traffic in order to reduce the latency.
\cite{dutta2010design} presented a new receiver-initiated link layer A-MAC to support multiple services under a unified architecture more efficiently and scalably than prior designs.
\cite{raman2010pip} designed a TDMA-based MAC primitive module PIP to achieve high throughput for reliable bulk data transfer.
\section{Conclusion}
\label{sec:conclusion}
Energy conservation is a fundamental issue in WSNs, which usually relies on wise designs of duty cycling mechanisms. In this paper, we propose a localized scheme, \protocol, to adaptively adjust the duty cycles of all nodes in WSNs.
\protocol leverages the technique of QS and adjust the duty cycles of sensor nodes according to their demand, so that all nodes can fairly access their common channels.
We address both synchronous and asynchronous cases with \protocol and implement it on a test-bed with 100 TelosB nodes.
The results demonstrate that ADC significantly improves the WSN performance such as network throughput and PRR.
In our future work, we plan to design protocols of duty cycle adjustment, which has more high utilization of active time and lower duty cycle, so the energy consumption efficiency can be increased.

\bibliographystyle{unsrt}
\bibliography{pakeyref}
\end{document}